

\documentclass[11pt]{article} 

\usepackage[utf8]{inputenc} 

\usepackage{cite}


\usepackage{hyperref}
\hypersetup{colorlinks=false}

\usepackage{geometry,amsmath} 
\geometry{a4paper} 
 \geometry{margin=0.8in} 

\usepackage{graphicx} 

\usepackage{lipsum}

\numberwithin{equation}{section}

%


\usepackage{booktabs} 
\usepackage{array} 
\usepackage{paralist} 
\usepackage{verbatim} 
\usepackage{subfig} 

\usepackage{float}

\usepackage{todonotes}

\usepackage{amssymb}
\usepackage{amsmath}
\usepackage{amsthm}
\usepackage{bm}
\usepackage{mathrsfs}

\usepackage{xcolor,pict2e}

\newcommand{\intpsym}{\,\, \setlength{\unitlength}{0.4mm}
	\begin{picture}(0, 0)(5, 5)%
	\put(0, 4){\line(1,0){5}} \put(5,4){\line(0,1){8}}
	\end{picture}\,\,}
\newcommand{\intp}{\,\,\intpsym}
\newcommand{\es}{\operatorname{S}}

\newcommand{\te}{\operatorname{T}}
\newcommand{\teJ}{\te_\mathbf{J}}
\newcommand{\teK}{\te_\mathbf{K}}
\newcommand{\id}{\operatorname{id}}

\newcommand{\volf}{\operatorname{vol}}
\newcommand{\upd}{\operatorname{d}}
\newcommand{\uph}{\operatorname{h}}
\newcommand{\upv}{\operatorname{v}}
\newcommand{\p}{\partial}
\newcommand{\dv}{\upd_{\upv}\!}
\newcommand{\diffi}{\mathrm{D}_{n^i}}
\newcommand{\diff}[1]{\mathrm{D}_{n^{#1}}}
\newcommand{\diffn}{\mathrm{D}_{n}}
\newcommand{\pde}{P$\Delta$E}

\newcommand{\mbn}{\mathbf{n}}
\newcommand{\mbu}{\mathbf{u}}
\newcommand{\mbv}{\mathbf{v}}
\newcommand{\mbx}{\mathbf{x}}
\newcommand{\mbI}{\mathbf{I}}
\newcommand{\mbJ}{\mathbf{J}}
\newcommand{\mbK}{\mathbf{K}}
\newcommand{\mcE}{\mathcal{E}}
\newcommand{\mcI}{\mathcal{I}}
\newcommand{\mcH}{\mathcal{H}}
\newcommand{\mcL}{\mathcal{L}}
\newcommand{\msL}{\mathscr{L}}
\newcommand{\bD}{\bm{\Delta}}
\newcommand{\dD}{\upd^{\vartriangle}}
\newcommand{\dDh}{\upd^{\vartriangle}_{\uph}}

\theoremstyle{plain}
\newtheorem{thm}{Theorem}[section]

\newtheorem{lem}[thm]{Lemma}
\newtheorem{prop}[thm]{Proposition}

\theoremstyle{definition}
\newtheorem{defn}{Definition}[section]
\newtheorem{exm}[defn]{Example}
\newtheorem{rem}[defn]{Remark}


\usepackage{fancyhdr} 
\pagestyle{fancy} 
\lhead{}\chead{}\rhead{}
\lfoot{}\cfoot{\thepage}\rfoot{}

\usepackage{sectsty}
\allsectionsfont{\sffamily\mdseries\upshape} 

\usepackage[nottoc,notlof,notlot]{tocbibind} 
\usepackage[titles,subfigure]{tocloft} 


\title{The difference variational bicomplex and multisymplectic systems}
\author{Linyu Peng$^{1}$\footnote{Corresponding author. Email: l.peng@mech.keio.ac.jp} ~ and Peter E. Hydon{$^2$}\footnote{Email: p.e.hydon@kent.ac.uk} \vspace{0.4cm}
\\
{\it 1. Department of Mechanical Engineering, Keio University,} \\
{\it Yokohama 223-8522, Japan}\\
{\it 2. School of Mathematics, Statistics and Actuarial Science,} \\
{\it University of Kent,  Canterbury CT2 7NF, UK}\\ }


\begin{document}

\maketitle

\begin{abstract}
The difference variational bicomplex, which is the natural setting for systems of difference equations, is constructed and used to examine the geometric and algebraic properties of various systems. Exactness of the bicomplex gives a coordinate-free setting for finite difference variational problems, Euler--Lagrange equations and Noether's theorem. We also examine the connection between the condition for the existence of a Hamiltonian and the multisymplecticity of systems of partial difference equations. Furthermore, we define difference multimomentum maps of multisymplectic systems, which yield their conservation laws.
To conclude, we adapt the variational bicomplex to multisymplectic integrators on a mesh that is logically rectangular. By scaling horizontal forms and difference operators according to the local step sizes, all of the results derived earlier can be applied, whether or not the mesh is uniform.
\vspace{0.2cm}

{Keywords: difference variational bicomplex; multisymplectic
system; multisymplectic integrator; conservation law; multimomentum map}\vspace{0.2cm}

{Mathematics Subject Classification: 65P10;  14F43;  39A12}

\end{abstract}

\section{Introduction}
Symmetry methods provide powerful tools for obtaining solutions and conservation laws of a given system of partial differential equations (PDEs) and for understanding structural features such as integrability \cite{BCA2010,Hy2000b,Ol1993}. In the formal geometric approach, the variational complex is central to the study
of symmetries, scalar conservation laws and Euler--Lagrange equations \cite{KO2003,Ol1993}. 
This complex is contained in the (augmented) variational bicomplex, which is a natural geometric setting for all of the above and also for multisymplectic PDEs \cite{BrHyLa2010} and other PDEs with form-valued conservation laws, as well as the Lagrangian multiform (or pluri-Lagrangian) formalism of integrable systems (see, e.g., \cite{LN2009,SV2016,SNC2020}).

The (free) variational bicomplex is constructed by splitting the exterior
derivative into horizontal and vertical parts (see \cite{Vi2001,KrVi1999,Vi1984,An1992,An1989}), which reflect the distinction between independent and dependent variables. It is augmented by a projection, the interior Euler operator, which is used to derive Euler--Lagrange equations from a given Lagrangian form. Independently, Anderson \cite{An1992,An1989}, Tsujishita \cite{Ts1982} and Vinogradov \cite{Vi2001,Vi1984} proved that the augmented variational bicomplex is exact.

Over the last three decades, symmetry analysis for differential equations has been extended to difference equations (see \cite{Do2001,Do2010,Hy2005,RaHy2007,PH2022,Xe2018,Hy2014,LeTrWi2000,Pe2013,Pe2017}).
Difference forms \cite{MaHy2008} and the difference variational complex
\cite{HyMa2004} have also been developed. These results are
fundamentally important for the geometric analysis of finite difference numerical schemes. For instance, a variational integrator yields Euler--Lagrange equations from a difference Lagrangian form.

This paper describes the difference variational bicomplex, as proposed in the thesis \cite{Pe2013},  and examines some of its applications. Section \ref{sec:vb} begins with a brief review of the differential variational bicomplex, which is a natural setting for multisymplectic systems of PDEs. Section \ref{sec:dvb} develops the main ideas and structures for the difference variational bicomplex. Exactness of the difference variational bicomplex plays an essential role in applications, giving coordinate-free versions of Noether's theorem  for finite difference variational problems and multisymplectic partial difference equations (P$\Delta$Es). (See the Appendix for a proof of exactness.) Section \ref{sec:hasy} uses the bicomplex to develop a coordinate-free difference version of Hamilton's
principle. In Section \ref{sec:mupde}, the
conservation of multisymplectic structures is studied for P$\Delta$Es, and we explain the link between multisymplectic systems and first-order quasilinear difference Lagrangian structures. Section \ref{sec:mommap} defines discrete
multimomentum maps, which generate conservation laws of a given multisymplectic system. In Section \ref{sec:multisyin}, we adapt the difference variational bicomplex to multisymplectic integrators on a logically-rectangular mesh.

\section{The (differential) variational bicomplex}
\label{sec:vb}
The variational bicomplex is a double complex of differential forms that arises by regarding independent variables as coordinates on a base space, with dependent variables and their derivatives coordinatizing fibres over each point of the base space. The geometric setting is the infinite prolongation bundle. We review the differential variational bicomplex (following Anderson's presentation \cite{An1992,An1989}, see also \cite{Ts1982,KrVi1999,Vi1984}), and show that this is a natural setting for multisymplectic systems of PDEs. 

\subsection{An overview of the variational bicomplex}
For differential equations with independent variables $\mbx=(x^1,x^2,\ldots,x^p)\in X\subset \mathbb{R}^p$ and dependent variables $\mbu=(u^1,u^2,\ldots, u^q)\in U\subset \mathbb{R}^q$, a natural geometric structure is the trivial fibred manifold 
\begin{equation}
\begin{aligned}
\pi: X\times U&\rightarrow X,\\
(\mbx,\mbu)&\mapsto \mbx.
\end{aligned}
\end{equation}
A solution $\mbu=f(\mbx)$ can be regarded as a local section, $s(\mbx)=(\mbx,f(\mbx))$. Restricting attention to neighbourhoods in which $f$ is smooth, a section can be prolonged to the infinite jet bundle $J^{\infty}(X\times U)$ whose coordinates represent derivatives:
\begin{equation}
\begin{aligned}
\pi^{\infty}: J^{\infty}(X\times U)&\rightarrow X,\\
(x^i,u^{\alpha},u_{\mathbf{1}_i}^{\alpha},\ldots,u^{\alpha}_{\mbJ},\ldots)&\mapsto \mbx,
\end{aligned}
\end{equation}
where $\mathbf{1}_i$ is the $p$-tuple whose only nonzero entry is $1$ in its $i$th place, $\mbJ=(j^1,j^2,\ldots,j^p)$ with all entries being non-negative integers, and $|\mbJ|=j^1+j^2+\cdots+j^p$.
The prolonged section $s$ has coordinates 
\begin{equation}
x^i,\quad u^\alpha=f^\alpha(\mbx),\quad u_{\mathbf{1}_i}^{\alpha}=\frac{\partial f^{\alpha}(\mbx)}{\partial x^i}\,,\quad \ldots,\quad u^{\alpha}_{\mbJ}=\frac{\partial^{|\mbJ|}f^{\alpha}(\mbx)}{(\partial x^1)^{j^1}(\partial x^2)^{j^2}\cdots  (\partial x^p)^{j^p}}\,, \quad \ldots 
\end{equation}
 In this setting, a differential equation defines a variety on the jet bundle.

The exterior derivative on $J^{\infty}(X\times U)$ can be written in terms of these local coordinates. Let $[\mbu]$ denote $\mbu$ and finitely many of its partial derivatives; then the exterior derivative of a locally smooth function $f(\mbx,[\mbu])$ is
\begin{equation}\label{eq:exde}
\upd\!f(\mbx,[\mbu])=\frac{\partial f(\mbx,[\mbu])}{\partial x^i}\upd\!x^i+\frac{\partial f(\mbx,[\mbu])}{\partial u^{\alpha}_{\mbJ}}\upd\! u^{\alpha}_{\mbJ}\,.
\end{equation}
The Einstein summation convention is used from \eqref{eq:exde} on. It is natural to split the exterior derivative using the contact forms $\upd\!u^{\alpha}_{\mbJ}-u^{\alpha}_{\mbJ+\mathbf{1}_i}\upd\!x^i$, because the pullback of each contact form  by any section $s$ is zero. This splitting gives
\begin{equation}\label{eq:mexde}
\upd\!f(\mbx,[\mbu])=\left(D_if(\mbx,[\mbu])\right)\upd\!x^i+\frac{\partial f(\mbx,[\mbu])}{\partial u^{\alpha}_{\mbJ}}\left(\upd\!u^{\alpha}_{\mbJ}-u^{\alpha}_{\mbJ+\mathbf{1}_i}\upd\!x^i\right),
\end{equation}
where
\begin{equation}
D_i=\frac{\partial}{\partial x^i}+u_{\mbJ+\mathbf{1}_i}^{\alpha}\frac{\partial}{\partial u_{\mbJ}^{\alpha}}
\end{equation}
is the total derivative with respect to $x^i$.  This splitting gives a basis for the set of differential one-forms: 
\begin{equation}\label{eq:dxdu}
\upd\!x^i,\qquad \upd\!u^{\alpha}_{\mbJ}-u^{\alpha}_{\mbJ+\mathbf{1}_i}\upd\!x^i,
\end{equation}
which is extended to a basis for the set $\Omega$ of all differential forms by using the wedge product. 
The exterior derivative splits into the horizontal derivative $\upd_{\uph}$ and the vertical derivative $\dv $, as follows:
\begin{equation}
\upd=\upd_{\uph}+\dv\, ,
\end{equation}
where
\begin{equation}\label{eq:dhdiff}
\upd_{\uph}=\upd\!x^i\wedge D_i\,,\quad \dv =\left(\upd\!u^{\alpha}_{\mbJ}-u^{\alpha}_{\mbJ+\mathbf{1}_i}\!\upd\!x^i\right)\wedge\frac{\partial}{\partial u^{\alpha}_{\mbJ}}\,.
\end{equation}
Direct calculation yields the identities 
\begin{equation}\label{eq:diffdhdv}
\upd_{\uph}^2=0,\qquad \upd_{\uph}\!\dv =-\dv \upd_{\uph},\qquad \upd_{\upv}^2=0.
\end{equation}
The contact forms $\dv u^{\alpha}_{\mbJ}=\upd\!u^{\alpha}_{\mbJ}-u^{\alpha}_{\mbJ+\mathbf{1}_i}\!\upd\!x^i$ form a basis for the set of vertical differential one-forms; the set of horizontal differential one-forms has a basis $\upd_{\uph}\!x^i=\upd\!x^i$. Furthermore,
\begin{equation}\label{vec1}
D_i\intp \upd_{\uph}\!x^j=\delta^j_i,\qquad D_i\intp \dv u^\beta_\mbK=0,\qquad \frac{\partial}{\partial u^{\alpha}_{\mbJ}}\intp \upd_{\uph}\!x^j=0,\qquad \frac{\partial}{\partial u^{\alpha}_{\mbJ}}\intp \dv u^\beta_\mbK=\delta^\beta_\alpha\delta^\mbJ_\mbK.
\end{equation}
A locally smooth vertical vector field $\mbv_0=Q^\alpha\,\p/\p u^\alpha$ on $X\times U$ can be prolonged to all orders to yield the vector field $\mbv=D_{\mbJ}Q^\alpha\,\p/\p u^\alpha_{\mbJ}$ on $J^{\infty}(X\times U)$. This can be generalized, with the same prolongation formula, by allowing each $Q^\alpha$ to depend on finitely many derivatives of $\mbu$, in which case $\mbv$ is a vertical generalized vector field on $J^{\infty}(X\times U)$. By the prolongation formula, $\mbv$ commutes with each $D_i$. 

A $(k+l)$-form  $\sigma$ on $J^{\infty}(X\times U)$ is said to be a $(k,l)$-\textit{form} if it can be written as 
\begin{equation}
\sigma=f_{i_1,\ldots,i_k;\alpha_1,\ldots,\alpha_l}^{\mbJ_1,\ldots,\mbJ_l}(\mbx,[\mbu])\upd_{\uph}\!x^{i_1}\wedge \cdots \wedge \upd_{\uph}\!x^{i_k}\wedge \dv  u_{\mbJ_1}^{\alpha_1}\wedge\cdots\wedge \dv u_{\mbJ_l}^{\alpha_l},
\end{equation}
where each $f_{i_1,\ldots,i_k;\alpha_1,\ldots,\alpha_l}^{\mbJ_1,\ldots,\mbJ_l}$ is a locally smooth function.
The Lie derivative of $\sigma$ with respect to a vector field $\mbv$, denoted $\mcL_\mbv \sigma$, may be obtained from Cartan's formula:
\begin{equation}
\mcL_\mbv \sigma=\mbv\intp\upd\;\!\!\sigma+\upd(\mbv\intp\sigma).
\end{equation}
Routine calculations, adapted to the horizontal and vertical derivatives, give the following results.
\begin{lem}
	Let $\sigma$ be a differential form on $J^{\infty}(X\times U)$. If $\mbv$ is a vertical generalized vector field on $J^{\infty}(X\times U)$ then
	\[
	\mbv\intp\upd_{\uph}\;\!\!\sigma+\upd_{\uph}(\mbv\intp\sigma)=0,
	\]
	so
	\[
	\mcL_\mbv\sigma=\mbv\intp\dv \;\!\!\sigma+\dv\,(\mbv\intp\sigma).
	\]
	Similarly,
\[
D_i\intp\dv \;\!\!\sigma+\dv\,(D_i\intp\sigma)=0,
\]
so
\[
\mcL_{D_i} \sigma=D_i\intp\upd_{\uph}\;\!\!\sigma+\upd_{\uph}(D_i\intp\sigma).
\]	
\end{lem}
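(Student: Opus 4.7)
The plan is to deduce both displayed identities from Cartan's formula combined with the horizontal/vertical splitting $\upd = \upd_\uph + \dv$. Substituting this splitting into $\mcL_\mbv \sigma = \mbv \intp \upd \sigma + \upd(\mbv \intp \sigma)$ rearranges Cartan's formula as
\[
\mcL_\mbv \sigma = \bigl( \mbv \intp \dv \sigma + \dv(\mbv \intp \sigma) \bigr) + \bigl( \mbv \intp \upd_\uph \sigma + \upd_\uph(\mbv \intp \sigma) \bigr),
\]
so the first assertion of the lemma is equivalent to showing that the operator $\mathcal{A} := \mbv \intp \upd_\uph + \upd_\uph \circ (\mbv \intp \cdot)$ annihilates every form, and, analogously, the second assertion amounts to $\mathcal{B} := D_i \intp \dv + \dv \circ (D_i \intp \cdot)$ vanishing identically.

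Each of $\mathcal{A}$ and $\mathcal{B}$ is the graded anticommutator of an antiderivation of degree $-1$ (an interior product) with an antiderivation of degree $+1$ ($\upd_\uph$ or $\dv$), and is therefore itself a derivation of degree $0$ on the exterior algebra over $J^\infty(X\times U)$. It suffices to verify the vanishing on the generators: functions $f(\mbx,[\mbu])$, horizontal one-forms $\upd_\uph x^i$, and contact one-forms $\dv u^\alpha_\mbJ$. Most cases collapse immediately via \eqref{vec1} together with $\upd_\uph^2 = \dv^2 = 0$ and $\upd_\uph \dv = -\dv \upd_\uph$ from \eqref{eq:diffdhdv}; a vertical generalized $\mbv$ contracts trivially with $\upd_\uph x^i$, each $D_i$ contracts trivially with $\dv u^\alpha_\mbJ$, and both interior products annihilate functions.

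The only substantive case is $\mathcal{A}$ applied to $\dv u^\alpha_\mbJ$. Here I would use $\upd_\uph u^\alpha_\mbJ = u^\alpha_{\mbJ+\mathbf{1}_i}\,\upd x^i$ together with $\upd_\uph \dv = -\dv \upd_\uph$ to compute $\upd_\uph \dv u^\alpha_\mbJ = -\dv u^\alpha_{\mbJ+\mathbf{1}_i} \wedge \upd x^i$; then for $\mbv = D_\mbK Q^\beta\, \p/\p u^\beta_\mbK$, the contraction gives $\mbv \intp \upd_\uph \dv u^\alpha_\mbJ = -(D_{\mbJ+\mathbf{1}_i} Q^\alpha)\,\upd x^i$, which cancels $\upd_\uph(\mbv \intp \dv u^\alpha_\mbJ) = D_i(D_\mbJ Q^\alpha)\,\upd x^i$ by the commutativity of total derivatives --- equivalently, by the defining property of a prolonged vertical generalized vector field. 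The main obstacle is the sign-bookkeeping when the interior product is distributed across a horizontal one-form via the antiderivation rule, since this is what produces the negative sign that is needed for the cancellation; once $\mathcal{A}$ (and similarly $\mathcal{B}$) is shown to vanish on all generators, the derivation property propagates the identity to the whole of $\Omega$ and the lemma follows.
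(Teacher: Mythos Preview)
Your argument is correct. The paper itself does not give a proof of this lemma beyond the remark that it follows from ``routine calculations, adapted to the horizontal and vertical derivatives,'' so your proposal is precisely the kind of verification the authors leave to the reader: recognise $\mathcal{A}$ and $\mathcal{B}$ as degree-zero derivations (graded commutators of interior products with $\upd_\uph$ or $\dv$), then check vanishing on the algebra generators using \eqref{vec1}, \eqref{eq:diffdhdv}, and the prolongation formula. The one nontrivial case you isolate, $\mathcal{A}(\dv u^\alpha_\mbJ)$, is handled correctly, and the cancellation does indeed rest on $D_i D_\mbJ = D_{\mbJ+\mathbf{1}_i}$, which encodes exactly the commutativity of total derivatives that defines a prolonged vertical field.
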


Let $\Omega^{k,l}$ be the set of all $(k,l)$-forms over $J^{\infty}(X\times U)$. Then
\begin{equation}
\upd_{\uph}:\Omega^{k,l}\rightarrow \Omega^{k+1,l},\quad \dv :\Omega^{k,l}\rightarrow \Omega^{k,l+1},
\end{equation}
and the identities \eqref{eq:diffdhdv} yield a double complex called the variational bicomplex (Fig. \ref{fig:vb}). A $(k,l)$-form $\sigma$ is horizontally-closed if
$\upd_{\uph}\!\sigma=0$ and
horizontally exact if there exists a form $\tau\in\Omega^{k-1,l}$
such that
$\sigma=\upd_{\uph}\!\tau$.
Similarly, $\sigma$ is vertically-closed if
$\dv \sigma=0$ and
vertically exact if there exists $\tau\in\Omega^{k,l-1}$
such that
$\sigma=\dv \tau$.

\begin{figure}[htbp]
\centering
\begin{minipage}{14.5cm}
\setlength{\unitlength}{1cm}
\begin{picture}(15,7.5)
\put(0.03,0.1){$0$}
\put(0.3,0.23){\vector(1,0){0.8}}
\put(1.2,0.1){$\mathbb{R}$}
\put(1.5,0.23){\vector(1,0){0.5}}
\put(2.1,0.1){$\Omega^{0,0}$}
\put(2.8,0.23){\vector(1,0){1.5}}
\put(4.45,0.1){$\Omega^{1,0}$}
\put(5.15,0.23){\vector(1,0){1.5}}
\put(6.9,0.11){$\cdots$}
\put(7.5,0.23){\vector(1,0){1.5}}
\put(9.1,0.1){$\Omega^{p-1,0}$}
\put(10.2,0.23){\vector(1,0){1.5}}
\put(11.8,0.1){$\Omega^{p,0}$}
\put(12.5,0.23){\vector(1,0){1.5}}
\put(14.1,0.1){0}

\put(0.03,2.1){0}
\put(0.3,2.23){\vector(1,0){1.7}}
\put(2.1,2.1){$\Omega^{0,1}$}
\put(2.8,2.23){\vector(1,0){1.5}}
\put(4.45,2.1){$\Omega^{1,1}$}
\put(5.15,2.23){\vector(1,0){1.5}}
\put(6.9,2.11){$\cdots$}
\put(7.5,2.23){\vector(1,0){1.5}}
\put(9.1,2.1){$\Omega^{p-1,1}$}
\put(10.2,2.23){\vector(1,0){1.5}}
\put(11.8,2.1){$\Omega^{p,1}$}
\put(12.5,2.23){\vector(1,0){1.5}}
\put(14.1,2.1){0}

\put(0.03,4.1){0}
\put(0.3,4.23){\vector(1,0){1.7}}
\put(2.1,4.1){$\Omega^{0,2}$}
\put(2.8,4.23){\vector(1,0){1.5}}
\put(4.45,4.1){$\Omega^{1,2}$}
\put(5.15,4.23){\vector(1,0){1.5}}
\put(6.9,4.11){$\cdots$}
\put(7.5,4.23){\vector(1,0){1.5}}
\put(9.1,4.1){$\Omega^{p-1,2}$}
\put(10.2,4.23){\vector(1,0){1.5}}
\put(11.8,4.1){$\Omega^{p,2}$}
\put(12.5,4.23){\vector(1,0){1.5}}
\put(14.1,4.1){0}

\put(2.25,0.5){\vector(0,1){1.5}}
\put(2.25,2.5){\vector(0,1){1.5}}
\put(2.25,4.5){\vector(0,1){1.5}}
\put(2.23,6.3){\vdots}

\put(4.6,0.5){\vector(0,1){1.5}}
\put(4.6,2.5){\vector(0,1){1.5}}
\put(4.6,4.5){\vector(0,1){1.5}}
\put(4.58,6.3){\vdots}

\put(9.25,0.5){\vector(0,1){1.5}}
\put(9.25,2.5){\vector(0,1){1.5}}
\put(9.25,4.5){\vector(0,1){1.5}}
\put(9.23,6.3){\vdots}

\put(11.97,0.5){\vector(0,1){1.5}}
\put(11.97,2.5){\vector(0,1){1.5}}
\put(11.97,4.5){\vector(0,1){1.5}}
\put(11.95,6.3){\vdots}

\put(3.3,0.4){$\upd_{\uph}$}
\put(5.6,0.4){$\upd_{\uph}$}
\put(8,0.4){$\upd_{\uph}$}
\put(10.7,0.4){$\upd_{\uph}$}
\put(13.1,0.4){$\upd_{\uph}$}

\put(3.3,2.4){$\upd_{\uph}$}
\put(5.6,2.4){$\upd_{\uph}$}
\put(8,2.4){$\upd_{\uph}$}
\put(10.7,2.4){$\upd_{\uph}$}
\put(13.1,2.4){$\upd_{\uph}$}

\put(3.3,4.4){$\upd_{\uph}$}
\put(5.6,4.4){$\upd_{\uph}$}
\put(8,4.4){$\upd_{\uph}$}
\put(10.7,4.4){$\upd_{\uph}$}
\put(13.1,4.4){$\upd_{\uph}$}

\put(2.35,1.1){$\dv $}
\put(2.35,3.1){$\dv $}
\put(2.35,5.1){$\dv $}

\put(4.7,1.1){$\dv $}
\put(4.7,3.1){$\dv $}
\put(4.7,5.1){$\dv $}

\put(9.37,1.1){$\dv $}
\put(9.37,3.1){$\dv $}
\put(9.37,5.1){$\dv $}

\put(12.1,1.1){$\dv $}
\put(12.1,3.1){$\dv $}
\put(12.1,5.1){$\dv $}
\end{picture}
\end{minipage}
\caption{The variational bicomplex.}
\label{fig:vb}
\end{figure}

The cohomology of the variational bicomplex in Fig. \ref{fig:vb} has been well-studied; for proofs, see  \cite{An1992,Ts1982,KrVi1999,Vi1984}.
Each column of the bicomplex is the analogue of the de Rham complex
for a topologically trivial space, so that any vertically
closed form $\sigma$ is also vertically exact. However, this is not true for the rows, where the Poincar\'{e} Lemma fails. Specifically, for any $l\geq1$, there
exist horizontally-closed $(p,l)$-forms that are not horizontally
exact. To overcome this inconvenience, a projection $\mcI$ on
$\Omega^{p,l},~l\geq1$ is used to make the rows exact, yielding the augmented variational bicomplex in Fig. \ref{fig:mvb}; here $\mathscr{F}^l:=\mcI\left(\Omega^{p,l}\right)$. The cohomology groups of the augmented  variational bicomplex are all trivial, reflecting the topological triviality of $J^\infty(X\times U)$. If one pulls back the complex to the submanifold of (infinitely prolonged) solutions of a given system of PDEs, the vertical cohomology groups remain trivial, but some horizontal cohomology groups may be nontrivial.

\begin{figure}[htbp]
\centering
\begin{minipage}{15.5cm}
\setlength{\unitlength}{1.1cm}
\begin{picture}(18,7.5)
\put(-0.19,0.1){$0$} \put(0.,0.23){\vector(1,0){0.4}}
\put(0.5,0.1){$\mathbb{R}$} \put(0.8,0.23){\vector(1,0){0.4}}
\put(1.3,0.1){$\Omega^{0,0}$} \put(2.0,0.23){\vector(1,0){1.5}}
\put(3.65,0.1){$\Omega^{1,0}$} \put(4.35,0.23){\vector(1,0){1.5}}
\put(6.1,0.11){$\cdots$} \put(6.7,0.23){\vector(1,0){1.5}}
\put(8.3,0.1){$\Omega^{p-1,0}$} \put(9.4,0.23){\vector(1,0){1.5}}
\put(11.0,0.1){$\Omega^{p,0}$}

\put(-0.19,2.1){0} \put(0.,2.23){\vector(1,0){1.2}}
\put(1.3,2.1){$\Omega^{0,1}$} \put(2.,2.23){\vector(1,0){1.5}}
\put(3.65,2.1){$\Omega^{1,1}$} \put(4.35,2.23){\vector(1,0){1.5}}
\put(6.1,2.11){$\cdots$} \put(6.7,2.23){\vector(1,0){1.5}}
\put(8.3,2.1){$\Omega^{p-1,1}$} \put(9.4,2.23){\vector(1,0){1.5}}
\put(11.0,2.1){$\Omega^{p,1}$} \put(11.7,2.23){\vector(1,0){0.7}}
\put(12.5,2.1){$\mathscr{F}^1$} \put(13.03,2.23){\vector(1,0){0.8}}
\put(13.9,2.1){0}

\put(-0.19,4.1){0} \put(0.,4.23){\vector(1,0){1.2}}
\put(1.3,4.1){$\Omega^{0,2}$} \put(2.0,4.23){\vector(1,0){1.5}}
\put(3.65,4.1){$\Omega^{1,2}$} \put(4.35,4.23){\vector(1,0){1.5}}
\put(6.1,4.11){$\cdots$} \put(6.7,4.23){\vector(1,0){1.5}}
\put(8.3,4.1){$\Omega^{p-1,2}$} \put(9.4,4.23){\vector(1,0){1.5}}
\put(11.0,4.1){$\Omega^{p,2}$} \put(11.7,4.23){\vector(1,0){0.7}}
\put(12.5,4.1){$\mathscr{F}^2$} \put(13.03,4.23){\vector(1,0){0.8}}
\put(13.9,4.1){0}

\put(1.45,0.5){\vector(0,1){1.5}} \put(1.45,2.5){\vector(0,1){1.5}}
\put(1.45,4.5){\vector(0,1){1.5}} \put(1.43,6.3){\vdots}

\put(3.8,0.5){\vector(0,1){1.5}} \put(3.8,2.5){\vector(0,1){1.5}}
\put(3.8,4.5){\vector(0,1){1.5}} \put(3.78,6.3){\vdots}

\put(8.45,0.5){\vector(0,1){1.5}} \put(8.45,2.5){\vector(0,1){1.5}}
\put(8.45,4.5){\vector(0,1){1.5}} \put(8.43,6.3){\vdots}

\put(11.17,0.5){\vector(0,1){1.5}}
\put(11.17,2.5){\vector(0,1){1.5}}
\put(11.17,4.5){\vector(0,1){1.5}} \put(11.15,6.3){\vdots}

\put(2.5,0.4){$\upd_{\uph}$}
\put(4.8,0.4){$\upd_{\uph}$}
\put(7.2,0.4){$\upd_{\uph}$}
\put(9.9,0.4){$\upd_{\uph}$}

\put(2.5,2.4){$\upd_{\uph}$}
\put(4.8,2.4){$\upd_{\uph}$}
\put(7.2,2.4){$\upd_{\uph}$}
\put(9.9,2.4){$\upd_{\uph}$}
\put(11.9,2.4){$\mcI$}

\put(2.5,4.4){$\upd_{\uph}$}
\put(4.8,4.4){$\upd_{\uph}$}
\put(7.2,4.4){$\upd_{\uph}$}
\put(9.9,4.4){$\upd_{\uph}$}
\put(11.9,4.4){$\mcI$}

\put(1.55,1.1){$\dv $}
\put(1.55,3.1){$\dv $}
\put(1.55,5.1){$\dv $}

\put(3.9,1.1){$\dv $}
\put(3.9,3.1){$\dv $}
\put(3.9,5.1){$\dv $}

\put(8.57,1.1){$\dv $}
\put(8.57,3.1){$\dv $}
\put(8.57,5.1){$\dv $}

\put(11.3,1.1){$\dv $}
\put(11.3,3.1){$\dv $}
\put(11.3,5.1){$\dv $}

\put(12.75,2.5){\vector(0,1){1.5}}
\put(12.75,4.5){\vector(0,1){1.5}} \put(12.7,6.3){\vdots}

\put(12.9,3.1){$\delta_{\upv}$}
\put(12.9,5.1){$\delta_{\upv}$}

\put(11.75,0.5){\vector(2,3){0.9}}
\put(12.35,1.1){$\mcE$}
\end{picture}
\end{minipage}
\caption{The augmented variational bicomplex.}
\label{fig:mvb}
\end{figure}

To define $\mcI$, which is called the \textit{interior Euler operator}, it is helpful to use the multi-index notation 
\begin{equation}
D_{\mbJ}=D_1^{j^1}D_2^{j^2}\cdots D_p^{j^p},\qquad (-D)_{\mbJ}=(-D_1)^{j^1}(-D_2)^{j^2}\cdots (-D_p)^{j^p}=(-1)^{|\mbJ|}D_{\mbJ}\,;
\end{equation}
formally, $(-D)_{\mbJ}$ is the adjoint operator to $D_{\mbJ}$. Then $\mcI:\Omega^{p,l}\rightarrow \Omega^{p,l}$ is defined by 
\begin{equation}
\mcI(\sigma)=\frac{1}{l}\dv u^{\alpha}\wedge (-D)_{\mbJ}\left(\frac{\partial}{\partial u_{\mbJ}^{\alpha}}\intp \sigma\right),\quad \sigma\in\Omega^{p,l},
\end{equation} 
which gives the same outcome (up to a divergence) as integration by parts. The rows containing $\Omega^{k,l}$ (for fixed $l\geq 1$) are exact; in particular, $\text{ker}(\mcI)=\text{im}(\upd_{\uph})$. Moreover, the interior Euler operator is a projection (so $\mcI^2=\mcI$); hence, for each $\sigma\in\Omega^{p,l},~l\geq 1$, there exists $\tau\in\Omega^{p-1,l}$ such that 
\begin{equation}\label{eq:ii}
\sigma=\mcI(\sigma)-\upd_{\uph}\!\tau.
\end{equation}

The \textit{Euler--Lagrange operator} $\mcE:\Omega^{p,0}\rightarrow\mathscr{F}^1$ is defined by $\mcE:=\mcI\dv $. Given a Lagrangian form, $\msL[\mbu]=L(\mbx,[\mbu])\text{vol}$, where $\text{vol}=\upd\!x^1\wedge\cdots\wedge\upd\!x^p$ is the volume form, 
\begin{equation}
\mcE(\msL)=(-D)_{\mbJ}\left(\frac{\partial L(\mbx,[\mbu])}{\partial u^{\alpha}_{\mbJ}}\right)\dv u^\alpha\wedge\text{vol},
\end{equation}
so the Euler--Lagrange equations are the coefficients of $\mcE(\msL)=0$.

Bearing in mind that $\mathscr{F}^l\subset \Omega^{p,l}$, the operators $\delta_{\upv}:=\mcI\dv $ give higher variations; in particular, $\delta_{\upv}:\mathscr{F}^1\rightarrow \mathscr{F}^2$ gives the Helmholtz conditions for the inverse problem of variational calculus. 
The variational complex is the edge of the augmented variational bicomplex, consisting of the bottom row, the Euler--Lagrange operator, and the column containing the spaces $\mathscr{F}^l$. The variational complex is exact, so
\[
\text{ker}(\mcE)=\text{im}(\upd_{\uph}),\qquad \text{im}(\mcE)=\text{ker}(\delta_{\upv}),
\]
and the column containing $\mathscr{F}^l$ is exact.

\subsection{Multisymplectic systems via the variational bicomplex}

In 1997, Bridges \cite{Br1997b} introduced a multisymplectic structure that generalizes the classical finite-dimensional Hamiltonian
structure to infinite-dimensional systems. Since then, the
multisymplectic formulation and its geometry has been developed and widely applied, see 
\cite{Br1997a,BrRe2001,CoHoHy2007,HuDe2008,SuQi2003,Wa2008,Br2017,CaIbDe1999,La2000,La2004,MaPaSh1998}.

The (augmented) variational bicomplex provides a natural framework for studying multisymplectic PDEs, an approach that was introduced by Bridges \textit{et al}. \cite{BrHyLa2010}. In this framework, a system of PDEs is \textit{multisymplectic} if there exists a vertically-closed $(p-1,2)$-form $\omega$ such that $\upd_{\uph}\!\omega$ is zero on the solution submanifold (but not identically zero). For simplicity, we restrict attention to the base manifold $\mathbb{R}^p$, but the results are local and can be adapted to other base manifolds. 

The starting-point is Zuckerman's discovery in \cite{Zuck} of a `universal conserved current' for any given Lagrangian form $\msL\in \Omega^{p,0}$. This is a vertically-closed differential form $\omega\in\Omega^{p-1,2}$ that is conserved on solutions of the Euler--Lagrange equations. It is instructive to revisit Zuckerman's proof using the augmented variational bicomplex. From \eqref{eq:ii}, there exists $\eta\in\Omega^{p-1,1}$ such that
\begin{equation}\label{fund}
\mcE(\msL)=\dv \msL+\upd_{\uph}\!\eta.
\end{equation}
The $(p-1,2)$-form $\omega=\dv \eta$ satisfies
\begin{equation}\label{dom}
\upd_{\uph}\!\omega=-\dv \upd_{\uph}\!\eta=-\dv \mcE(\msL).
\end{equation}
Consequently, $\upd_{\uph}\!\omega=0$ on the solution submanifold of the Euler--Lagrange equations. So every system of Euler--Lagrange equations is multisymplectic.

Independently, Gotay \cite{Gotay} developed a covariant Hamiltonian formalism for field theories of arbitrary order, using a generalized Legendre transformation to identify additional phase space variables $p_\alpha^{\mbJ}$ from the Lagrangian. A key ingredient is the Poincar\'{e}--Cartan form $\Theta$, which is a Lepagean equivalent of the Lagrangian form. In the above notation,
\begin{equation}
\Theta=\msL+\eta,
\end{equation}
which is a $p$-form of mixed type (from the viewpoint of the bicomplex). From \eqref{fund}, we have
\begin{equation}
\upd\!\Theta=\mcE(\msL)+\omega,
\end{equation}
and \eqref{dom} arises from $\upd^2\!\Theta=0$.

Up to inessential terms, Gotay's covariant Hamiltonian formalism yields the quasilinear first-order system of Euler--Lagrange equations for the modified Lagrangian
\begin{equation}
\widehat{L}=L(\mbx,[\mbu])+p_\alpha^{\mbJ+\mathbf{1}_i}(D_iu^\alpha_\mbJ-u^\alpha_{\mbJ+\mathbf{1}_i}).
\end{equation}
Here $u^\alpha_\mbJ, p_\alpha^{\mbJ}$ are distinct dependent variables on the phase space, with the latter variables playing the role of Lagrange multipliers. So nothing is lost by restricting attention to first-order quasilinear systems, in which case $[\mbu]$ fully coordinatizes the phase space. This was the approach introduced by Bridges \cite{Br1997b}; every multisymplectic system of PDEs (in Bridges' sense) is the set of Euler--Lagrange equations for a Lagrangian of the form 
\begin{equation}\label{lag}
L(x,[\mbu])=L_{\alpha}^i(\mbx,\mbu)D_iu^{\alpha}-H(\mbx,\mbu).
\end{equation}
Specifically, the system is 
\begin{equation}
K_{\alpha\beta}^i(\mbx,\mbu)D_iu^{\beta}-\frac{\partial L_{\alpha}^i}{\partial x^i}-\frac{\partial H}{\partial
u^{\alpha}}=0,
\end{equation}
where
\begin{equation}
K^i_{\alpha\beta}(\mbx,\mbu)=\frac{\partial L_{\beta}^i}{\partial
u^{\alpha}}-\frac{\partial L_{\alpha}^i}{\partial u^{\beta}}.
\end{equation}
This gives rise to the $(p-1,2)$-form
\begin{equation}\label{omkap}
\omega=\kappa^i\wedge(D_i\intp \text{vol}),
\end{equation}
where
\begin{equation}
\kappa^i=\sum_{\alpha<\beta}K_{\alpha\beta}^i(\mbx,\mbu)\dv u^{\alpha}\wedge\dv u^{\beta}.
\end{equation}
Note that $\omega$ is vertically-closed and it does not depend on derivatives of $\mbu$. The structural conservation law,
\begin{equation}
\upd_{\uph}\!\omega=0,    
\end{equation}
holds on solutions of the system; this law is the coordinate-free version of conservation of multisymplecticity.

One can partially reverse the above derivation, using the fact that the vertical cohomology groups are trivial, even for the restricted bicomplex. As $\omega$ is vertically-closed, exactness of the vertical columns implies the existence of $\eta\in\Omega^{p-1,1}$ such that $\omega = \dv \eta$. Moreover, on the solution submanifold,
\[
\dv \upd_{\uph}\!\eta=-\upd_{\uph}\!\dv \eta=0,
\]
so there exists $\msL\in\Omega^{p,0}$ (restricted to this submanifold) such that
\[
\upd_{\uph}\!\eta=-\dv \msL.
\]
Khavkine \cite{Khavkine} used this observation to prove the existence of a Lagrangian $(p,0)$-form on the full jet space, whose Euler--Lagrange equations are solved by the given multisymplectic system of PDEs. (However, these equations may be weaker than the given system and so admit other solutions.)

It is convenient to restrict attention to multisymplectic systems that are first-order and quasilinear, so that both $\omega$ and $\eta$ are defined over each point $\mbx$ in terms of the phase space variables and their vertical derivatives. For difference equations, however, this turns out not to be possible, as differences are not defined pointwise, nor are they derivations. Nevertheless, we now describe a difference analogue of the variational bicomplex, which was introduced in the thesis \cite{Pe2013}. It has very similar features to the differential case, giving rise to a standard form for multisymplectic difference equations.

\section{Construction of the difference variational bicomplex}\label{sec:dvb}
The building-blocks for the difference variational bicomplex are difference prolongation spaces \cite{MaRoHyPe2019}, difference forms \cite{HyMa2004,MaHy2008} and the difference variational complex \cite{HyMa2004} over the base space $\mathbb{Z}^p$.

Consider a system of P$\Delta$Es with $p$ independent variables, $n^i\in\mathbb{Z}$, and $q$ dependent variables, $u^\alpha\in\mathbb{R}$. These variables can be regarded as coordinates on the \textit{total space}, $\mathbb{Z}^p\times\mathbb{R}^q$: the discrete base space $\mathbb{Z}^p$ and the connected fibres $\mathbb{R}^q$ are coordinatized respectively by $\mbn=(n^1,n^2,\ldots,n^p)$ and $\mbu=(u^1,u^2,\dots,u^q)$. For simplicity, we shall assume that this coordinate system applies everywhere, though the results below can be adapted if more than one coordinate patch is needed for a particular \pde\ system.
The fibres are mapped to one another by the horizontal translations
\begin{equation}
\begin{aligned}
\teJ:\mathbb{Z}^p\times\mathbb{R}^q&\rightarrow\mathbb{Z}^p\times\mathbb{R}^q\\
(\mbn,\mbu)&\mapsto(\mbn+\mbJ,\mbu).
\end{aligned}
\end{equation}
Note that $\teJ\circ\teK=\te_{\mbJ+\mbK}$ for all $\mbJ,\mbK\in\mathbb{Z}^p$. (See \cite{Hy2014} for other transformations of total space.)

As the total space is disconnected, it is helpful to construct a connected representation of this space over each base point. To do this, each fibre is prolonged to include the values of the coordinates on all other fibres as coordinates in a Cartesian product, using the pullback of each $u^\alpha$ with respect to every $\teJ$. The (connected) total prolongation space over an arbitrary base point, denoted $P(\mathbb{R}^q)$ (or $P_\mbn(\mathbb{R}^q)$ if the base point, $\mbn$, is specified), has coordinates $(u^\alpha_{\mbJ})$, where
\[
u^\alpha_{\mbJ}=\teJ^*\!u^\alpha.
\]
In particular, $u^{\alpha}_{\mathbf{0}}=u^{\alpha}$. The total prolongation space provides a convenient setting for the study of  geometric properties of difference equations.

The composition rule for horizontal translations gives the identities
\[
u^\alpha_{\mbJ+\mbK}=\teK^*\!u^\alpha_{\mbJ}.
\]
More generally, let $f$ be a function on $\mathbb{Z}^p\times P(\mathbb{R}^q)$ and denote its restriction to each total prolongation space $P_{\mbn}(\mathbb{R}^q)$ by $f_\mbn((u^{\alpha}_{\mbJ}))=f(\mbn,(u^{\alpha}_{\mbJ}))$; for simplicity, we assume that every $f_\mbn$ is smooth. Then the pullback of $f_{\mbn+\mbK}((u^{\alpha}_{\mbJ}))$ with respect to $\te_{\mbK}$ is the function $\te_{\mbK}^*\!f_{\mbn+\mbK}$ on $P_{\mbn}(\mathbb{R}^q)$ whose values are $f(\mbn+\mbK,(u^{\alpha}_{\mbJ+\mbK}))$. So the action of each horizontal translation $\te_{\mbK}$ on the space of smooth functions on $P_{\mbn}(\mathbb{R}^q)$ can be represented by the \textit{shift operator}
\begin{equation}
\begin{aligned}
\es_{\mbK}:C^\infty(P_\mbn(\mathbb{R}^q))&\rightarrow C^\infty(P_\mbn(\mathbb{R}^q))\\
f(\mbn,(u^{\alpha}_{\mbJ}))&\mapsto f(\mbn+\mbK,(u^{\alpha}_{\mbJ+\mbK})),
\end{aligned}
\end{equation}
so that $\es_{\mbK}\!f_\mbn=\te_{\mbK}^*\!f_{\mbn+\mbK}$. Similarly, let $\sigma$ be a differential form on $\mathbb{Z}^p\times P(\mathbb{R}^q)$ whose restriction to each $P_{\mbn}(\mathbb{R}^q)$ is $\sigma_\mbn$. Then the action of $\te_{\mbK}$ on $\sigma_\mbn$ is represented by the shift
\begin{equation}
\es_{\mbK}\!\sigma_\mbn=\te_{\mbK}^*\!\sigma_{\mbn+\mbK}.
\end{equation}
By the standard properties of the pullback, $\es_\mbK$ commutes with the wedge product and with the exterior derivative on the fibre $P(\mathbb{R}^q)$, which we denote by $\dv $ (as it acts on dependent variables only):
\begin{equation}\label{sids}
\es_\mbK(\sigma_1\wedge\sigma_2)=(\es_\mbK\!\sigma_1)\wedge(\es_\mbK\!\sigma_2),\qquad\es_\mbK(\dv \sigma)=\dv\,(\es_\mbK\!\sigma).
\end{equation}

The difference structure is a consequence of the ordering of each independent variable. For any multi-index $\mbJ=(j^1,j^2,\dots,j^p)=j^i\mathbf{1}_i$, the corresponding shift operator is
$\es_{\mbJ}=\es_{1}^{j^1}\es_{2}^{j^2}\cdots \es_{p}^{j^p}$, where $\es_i:=\es_{\mathbf{1}_i}$ denotes the forward shift with respect to $n^i$. Then the forward difference in the $n^i$-direction is represented on each $P_{\mbn}(\mathbb{R}^q)$ by the operator
\begin{equation}\label{symdiffi}
\diffi=\es_i-\id,
\end{equation}
where $\id$ is the identity mapping. In \cite{HyMa2004}, Hydon \& Mansfield introduced difference forms on $\mathbb{Z}^p$. These have the same algebraic properties as differential forms on $\mathbb{R}^p$, with the exterior algebra on $p$ symbols, $\Delta^1,\Delta^2,\dots,\Delta^p$, replacing the exterior algebra on $\upd\! x^1,\upd\! x^2,\dots,\upd\! x^p$. The symbols $\Delta^i$ at any two different points are related by (horizontal) translation, so that
\begin{equation}\label{sdel}
\Delta^i\big|_\mbn=\te_{\mbK}^*(\Delta^i\big|_{\mbn+\mbK})=:\es_\mbK(\Delta^i\big|_{\mbn}).
\end{equation}
A difference $k$-form $\sigma$ on $\mathbb{Z}^p$ assigns a $k$-form,
\[
\sigma_\mbn=f_{i_1,\ldots,i_k}(\mbn)\,
\Delta^{i_1}\big|_\mbn\wedge\cdots\wedge\Delta^{i_k}\big|_\mbn,
\]
to each $\mbn\in\mathbb{Z}^p$. In view of the invariance of $\Delta^i$ under horizontal translations, we write
\begin{equation}
\sigma=f_{i_1,\ldots,i_k}(\mbn)\,
\Delta^{i_1}\wedge\cdots\wedge\Delta^{i_k}.
\end{equation}
The exterior difference operator $\bD$ maps difference $k$-forms to difference $(k+1)$-forms as follows:
\begin{equation}
\bD\sigma=\Delta^i\wedge \diffi\sigma.
\end{equation}
Unlike the exterior derivative $\dv $, the exterior difference $\bD$ is not a derivation; however, like $\dv $, it satisfies the important identity $\bD^2=0$. Note also that $\bD n^i=\Delta^i$. The exterior difference acts pointwise on difference forms over $\mathbb{Z}^p$ and extends immediately to difference forms over $\mathbb{Z}^p\times P(\mathbb{R}^q)$,
\[
\sigma=f_{i_1,\ldots,i_k}(\mbn,(u^{\alpha}_{\mbJ}))\,
\Delta^{i_1}\wedge\cdots\wedge\Delta^{i_k}.
\]
In particular, the exterior difference of a difference $(p-1)$-form,
\[
\sigma=\sum_{i=1}^p \left((-1)^{i-1}F^i(\mbn,(u^{\alpha}_{\mbJ}))\,
\Delta^{1}\wedge\cdots\wedge\widehat{\Delta^{i}}\wedge\cdots\wedge\Delta^{p}\right)
\]
where $\widehat{\Delta^i}$ denotes the absence of $\Delta^i$, is
\[
\bD\sigma=\text{Div}\,\mathbf{F}\,\Delta^{1}\wedge\cdots\wedge\Delta^{p},\qquad\text{where}\quad \text{Div}\,\mathbf{F}:=\diffi\!\left\{ F^i(\mbn,(u^{\alpha}_{\mbJ}))\right\}.
\]
Any function of the form $\text{Div}\,\mathbf{F}$, as defined above, is called a (difference) \textit{divergence}.

To obtain the difference variational bicomplex, we combine the above exterior difference and differential structures, using the wedge product. From here on, we add the condition that any form, when restricted to a particular $P_\mbn(\mathbb{R}^q)$, depends (locally) smoothly on a finite subset of the variables $(u^{\alpha}_{\mbJ})$; such a subset is denoted $[\mbu]$. Under this condition, a $(k,l)$-form is a $(k+l)$-form, $\sigma$, that can be written (without redundancies) as
\begin{equation}\label{eq:klform}
\sigma=f^{\mbJ_1,\ldots,\mbJ_l}_{i_1,\ldots,i_k;\alpha_1,\ldots,\alpha_l}(\mbn,[\mbu])\,
\Delta^{i_1}\wedge\cdots\wedge\Delta^{i_k}\wedge\dv 
u^{\alpha_1}_{\mbJ_1}\wedge\cdots\wedge\dv 
u^{\alpha_l}_{\mbJ_l};
\end{equation}
we denote the set of all such forms by $\Omega^{k,l}$. The exterior derivative is the mapping $\dv :\Omega^{k,l}\rightarrow\Omega^{k,l+1}$ whose action on \eqref{eq:klform} gives
\begin{equation}\label{eq:dvom}
\dv \sigma=\frac{\p}{\p u^\alpha_\mbJ}\left\{f^{\mbJ_1,\ldots,\mbJ_l}_{i_1,\ldots,i_k;\alpha_1,\ldots,\alpha_l}(\mbn,[\mbu])\right\}\,
\dv u^\alpha_\mbJ\wedge\Delta^{i_1}\wedge\cdots\wedge\Delta^{i_k}\wedge\dv 
u^{\alpha_1}_{\mbJ_1}\wedge\cdots\wedge\dv 
u^{\alpha_l}_{\mbJ_l}.
\end{equation}
Shifts of \eqref{eq:klform} are given by
\begin{equation}\label{eq:siom}
\es_\mbK\;\!\!\sigma=\es_\mbK\left\{f^{\mbJ_1,\ldots,\mbJ_l}_{i_1,\ldots,i_k;\alpha_1,\ldots,\alpha_l}(\mbn,[\mbu])\right\}\,
\Delta^{i_1}\wedge\cdots\wedge\Delta^{i_k}\wedge\dv 
u^{\alpha_1}_{\mbJ_1+\mbK}\wedge\cdots\wedge\dv 
u^{\alpha_l}_{\mbJ_l+\mbK},
\end{equation}
because \eqref{sdel} implies that $\es_\mbK\!\Delta^j=\Delta^j$. 
Bearing this in mind, the exterior difference is the mapping
\begin{equation}\label{exdiff}
\begin{aligned}
\dDh:\Omega^{k,l}&\rightarrow\Omega^{k+1,l}\\
\sigma&\mapsto \Delta^i\wedge \diffi\sigma.  \end{aligned}
\end{equation}

\begin{rem}
The operator $\diffi$ is the \textit{Lie difference} with respect to the horizontal translation $\te_{\mathbf{1}_i}$, because
\[
(\diffi \sigma)|_\mbn =\te_{\mathbf{1}_i}^*(\sigma_{\mbn+\mathbf{1}_i})-\sigma_{\mbn},
\]
the right-hand side of this expression being the standard definition of the Lie difference \cite{CramPir1987}.
\end{rem}
\begin{rem}
We have used $\dDh$ instead of $\bD$ (which was designed for pure difference forms), as it is helpful to mirror the standard notation used for the differential variational bicomplex. Both $\dDh$ and $\dv $ are invariant under all  allowable changes of the coordinates used to describe their respective spaces, namely $GL(p,\mathbb{Z})$ transformations of the base space $\mathbb{Z}^p$ and diffeomorphisms of $\mathbb{R}^q$ (prolonged to $\mathbb{Z}^p\times P(\mathbb{R}^q)$). 
\end{rem}

\begin{lem}\label{lemdD}
	The operators $\dDh$ and $\dv $ satisfy the identity
	\begin{equation}\label{anticom}
	\dDh\dv =-\dv \dDh.
	\end{equation}
	Consequently, the operator $\dD:=\dDh+\dv $ satisfies $(\dD)^2=0$.
\end{lem}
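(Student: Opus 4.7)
The plan is to verify the anticommutation $\dDh\dv\sigma = -\dv\dDh\sigma$ on a generic $(k,l)$-form $\sigma$ as in \eqref{eq:klform}, from which $(\dD)^2 = 0$ will follow by a short algebraic expansion. The two essential ingredients are already in place in the excerpt: the identity $\es_\mbK(\dv\sigma) = \dv(\es_\mbK\sigma)$ from \eqref{sids}, and the invariance $\es_\mbK\Delta^i = \Delta^i$ from \eqref{sdel}, which together tell us how the shift operator interacts with the fibre exterior derivative and the horizontal symbols.

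First I would observe that $\diffi = \es_i - \id$ commutes with $\dv$: indeed, by linearity and \eqref{sids},
\[
\diffi\,\dv\sigma \;=\; \dv(\es_i\sigma) - \dv\sigma \;=\; \dv(\es_i\sigma - \sigma) \;=\; \dv\,\diffi\sigma.
\]
Consequently, using the definition \eqref{exdiff},
\[
\dDh\,\dv\sigma \;=\; \Delta^i \wedge \diffi\,\dv\sigma \;=\; \Delta^i \wedge \dv\,\diffi\sigma.
\]
Next I would compute $\dv\,\dDh\sigma = \dv(\Delta^i \wedge \diffi\sigma)$ via the Leibniz rule. Because $\Delta^i$ depends only on the base variables, we have $\dv\Delta^i = 0$, and since $\Delta^i$ is a 1-form the passage of $\dv$ across it contributes a single sign change:
\[
\dv(\Delta^i \wedge \diffi\sigma) \;=\; -\,\Delta^i \wedge \dv\,\diffi\sigma.
\]
Comparing the two expressions yields \eqref{anticom}.

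For the second statement I would simply expand
\[
(\dD)^2 \;=\; (\dDh + \dv)^2 \;=\; \dDh^2 \;+\; (\dDh\dv + \dv\dDh) \;+\; \dv^2,
\]
then kill each piece: the cross-term vanishes by \eqref{anticom}; $\dv^2=0$ is the standard fact for the vertical (fibrewise) exterior derivative; and $\dDh^2=0$ follows as in the pure difference case (using $\es_j\diffi = \diffi\es_j$, so that $\diff_{n^j}\diffi = \diffi\diff_{n^j}$, and then antisymmetrising via $\Delta^i \wedge \Delta^j = -\Delta^j\wedge\Delta^i$ — equivalently, an extension to $\Omega^{k,l}$ of the identity $\bD^2=0$ already cited).

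No real obstacle is anticipated; the only points that need care are the sign from Leibniz when $\dv$ passes through the 1-form $\Delta^i$, and being explicit that the shift operator commutes with both $\dv$ and the wedge product (already recorded in \eqref{sids}), since this is what underpins the commutativity of $\diffi$ and $\dv$. A brief remark may also be warranted to confirm that the computation is independent of the ordering convention used in \eqref{eq:dvom}, as this convention only affects global signs that cancel on both sides of \eqref{anticom}.
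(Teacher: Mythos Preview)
Your argument is correct and follows essentially the same route as the paper's proof: both reduce \eqref{anticom} to the fact that each shift $\es_i$ commutes with $\dv$ (the paper phrases this via the coordinate identity $\es_i(\p f/\p u^\alpha_{\mbJ})=\p(\es_i f)/\p(\es_i u^\alpha_{\mbJ})$, whereas you invoke the equivalent statement \eqref{sids} directly), and then both obtain $(\dD)^2=0$ from $(\dDh)^2=0$ and $\dv^2=0$. Your use of the Leibniz rule to pass $\dv$ through $\Delta^i$ is a clean way to package the sign, and your caveat about the ordering convention in \eqref{eq:dvom} is appropriate but ultimately harmless, as you note.
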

\begin{proof}
	To prove \eqref{anticom}, apply $\dDh\dv $ to an arbitrary $(k,l)$-form $\sigma$, then use the identities \eqref{eq:siom} and $\es_i(\p f/\p u^\alpha_{\mbJ})=\p (\es_i\! f)/\p (\es_i\! u^\alpha_{\mbJ})$. The identity for $\dD$ is due to the symmetry of mixed partial differences (resp. derivatives) and antisymmetry of the wedge product, which gives $(\dDh)^2=0$ (resp. $\upd_{\upv}^2=0$).
\end{proof}

The operator $\dD$, which we call the \textit{exterior difference-derivative}, is analogous to the exterior derivative $\upd$ on the infinite jet bundle. It splits into horizontal and vertical components, from which the difference variational bicomplex can be constructed in the same way as for the differential case (with $\dDh$ replacing $\upd_{\uph}$).

\begin{figure}[htbp]
	\centering
	\begin{minipage}{15.5cm}
		
		\setlength{\unitlength}{1.1cm}
		\begin{picture}(18,7.5)
		\put(-0.19,0.1){$0$} \put(0.,0.23){\vector(1,0){0.4}}
		\put(0.5,0.1){$\mathbb{R}$} \put(0.8,0.23){\vector(1,0){0.4}}
		\put(1.3,0.1){$\Omega^{0,0}$} \put(2.0,0.23){\vector(1,0){1.5}}
		\put(3.65,0.1){$\Omega^{1,0}$} \put(4.35,0.23){\vector(1,0){1.5}}
		\put(6.1,0.11){$\cdots$} \put(6.7,0.23){\vector(1,0){1.5}}
		\put(8.3,0.1){$\Omega^{p-1,0}$} \put(9.4,0.23){\vector(1,0){1.5}}
		\put(11.0,0.1){$\Omega^{p,0}$}
		
		\put(-0.19,2.1){0} \put(0.,2.23){\vector(1,0){1.2}}
		\put(1.3,2.1){$\Omega^{0,1}$} \put(2.,2.23){\vector(1,0){1.5}}
		\put(3.65,2.1){$\Omega^{1,1}$} \put(4.35,2.23){\vector(1,0){1.5}}
		\put(6.1,2.11){$\cdots$} \put(6.7,2.23){\vector(1,0){1.5}}
		\put(8.3,2.1){$\Omega^{p-1,1}$} \put(9.4,2.23){\vector(1,0){1.5}}
		\put(11.0,2.1){$\Omega^{p,1}$} \put(11.7,2.23){\vector(1,0){0.7}}
		\put(12.5,2.1){$\mathscr{F}^1$} \put(13.03,2.23){\vector(1,0){0.8}}
		\put(13.9,2.1){0}
		
		\put(-0.19,4.1){0} \put(0.,4.23){\vector(1,0){1.2}}
		\put(1.3,4.1){$\Omega^{0,2}$} \put(2.0,4.23){\vector(1,0){1.5}}
		\put(3.65,4.1){$\Omega^{1,2}$} \put(4.35,4.23){\vector(1,0){1.5}}
		\put(6.1,4.11){$\cdots$} \put(6.7,4.23){\vector(1,0){1.5}}
		\put(8.3,4.1){$\Omega^{p-1,2}$} \put(9.4,4.23){\vector(1,0){1.5}}
		\put(11.0,4.1){$\Omega^{p,2}$} \put(11.7,4.23){\vector(1,0){0.7}}
		\put(12.5,4.1){$\mathscr{F}^2$} \put(13.03,4.23){\vector(1,0){0.8}}
		\put(13.9,4.1){0}
		
		\put(1.45,0.5){\vector(0,1){1.5}} \put(1.45,2.5){\vector(0,1){1.5}}
		\put(1.45,4.5){\vector(0,1){1.5}} \put(1.43,6.3){\vdots}
		
		\put(3.8,0.5){\vector(0,1){1.5}} \put(3.8,2.5){\vector(0,1){1.5}}
		\put(3.8,4.5){\vector(0,1){1.5}} \put(3.78,6.3){\vdots}
		
		\put(8.45,0.5){\vector(0,1){1.5}} \put(8.45,2.5){\vector(0,1){1.5}}
		\put(8.45,4.5){\vector(0,1){1.5}} \put(8.43,6.3){\vdots}
		
		\put(11.17,0.5){\vector(0,1){1.5}}
		\put(11.17,2.5){\vector(0,1){1.5}}
		\put(11.17,4.5){\vector(0,1){1.5}} \put(11.15,6.3){\vdots}
		
		\put(2.5,0.4){$\dDh$}
		\put(4.8,0.4){$\dDh$}
		\put(7.2,0.4){$\dDh$}
		\put(9.9,0.4){$\dDh$}
		
		\put(2.5,2.4){$\dDh$}
		\put(4.8,2.4){$\dDh$}
		\put(7.2,2.4){$\dDh$}
		\put(9.9,2.4){$\dDh$}
		\put(11.9,2.4){$\mcI^{\vartriangle}$}
		
		\put(2.5,4.4){$\dDh$}
		\put(4.8,4.4){$\dDh$}
		\put(7.2,4.4){$\dDh$}
		\put(9.9,4.4){$\dDh$}
		\put(11.9,4.4){$\mcI^{\vartriangle}$}
		
		\put(1.55,1.1){$\dv $}
		\put(1.55,3.1){$\dv $}
		\put(1.55,5.1){$\dv $}
		
		\put(3.9,1.1){$\dv $}
		\put(3.9,3.1){$\dv $}
		\put(3.9,5.1){$\dv $}
		
		\put(8.57,1.1){$\dv $}
		\put(8.57,3.1){$\dv $}
		\put(8.57,5.1){$\dv $}
		
		\put(11.3,1.1){$\dv $}
		\put(11.3,3.1){$\dv $}
		\put(11.3,5.1){$\dv $}
		
		\put(12.75,2.5){\vector(0,1){1.5}}
		\put(12.75,4.5){\vector(0,1){1.5}} \put(12.7,6.3){\vdots}
		
		\put(12.9,3.1){$\delta^{\vartriangle}_{\upv}$}
		\put(12.9,5.1){$\delta^{\vartriangle}_{\upv}$}
		
		\put(11.75,0.5){\vector(2,3){0.9}}
		\put(12.35,1.1){$\mcE^{\vartriangle}$}
		\end{picture}
	\end{minipage}
	\caption{The augmented difference variational bicomplex.}
	\label{fig:mdvb}
\end{figure}

For variational problems, a difference version of the interior Euler operator is needed to form the augmented difference variational bicomplex, which is shown in Fig. \ref{fig:mdvb}. Here, summation by parts replaces integration by parts, yielding the \textit{difference interior Euler operator} $\mcI^{\vartriangle}$ defined by
\begin{equation} 
\mcI^{\vartriangle}(\sigma):=\frac{1}{l}\,\dv u^{\alpha}
\wedge{\es_{-\mbJ}\left(\frac{\partial}{\partial
		u^{\alpha}_{\mbJ}}\intp\sigma\right)}, \qquad
\sigma\in\Omega^{p,l}.
\end{equation}
Note that $\es_{-\mbJ}$ is the formal adjoint of $\es_{\mbJ}$. From here on, let $\text{vol}=\Delta^1\wedge\cdots\wedge\Delta^p$ denote the difference version of the volume $p$-form. The \textit{difference Euler--Lagrange operator} $\mcE^{\vartriangle}:\Omega^{p,0}\rightarrow\mathscr{F}^1$ is defined by $\mcE^{\vartriangle}:=\mcI^{\vartriangle}\dv $. For a difference Lagrangian form, $\msL[\mbu]=L(\mbn,[\mbu])\text{vol}\in\Omega^{p,0}$,
\begin{equation}
\mcE^{\vartriangle}(\msL)=\es_{-\mbJ}\left(\frac{\partial L(\mbn,[\mbu])}{\partial u^{\alpha}_{\mbJ}}\right)\dv u^\alpha\wedge \text{vol}.
\end{equation}
Therefore, the difference Euler--Lagrange equations,
\[
\es_{-\mbJ}\left(\frac{\partial L(\mbn,[\mbu])}{\partial u^{\alpha}_{\mbJ}}\right)=0,
\]
are the coefficients of $\mcE^{\vartriangle}(\msL)=0$.
The operators
$\delta_{\upv}^{\vartriangle}:\mathscr{F}^l\rightarrow\mathscr{F}^{l+1}$ are
defined by
$\delta_{\upv}^{\vartriangle}:=\mcI^{\vartriangle}\dv $. Direct computation shows that $\mcI^{\vartriangle}$ is a projection, that is,
\begin{equation}\label{iproj}
\left(\mcI^{\vartriangle}\right)^2=\mcI^{\vartriangle},
\end{equation} and that the conditions for a cochain complex are satisfied by the rows, columns and edge sequence:
\begin{equation}\label{coch}
\mcI^{\vartriangle}\upd_{\uph}^{\vartriangle}=0,\qquad \mcE^{\vartriangle}\upd_{\uph}^{\vartriangle}=0,\qquad \delta_{\upv}^{\vartriangle}\mcE^{\vartriangle}=0,\qquad \left(\delta_{\upv}^{\vartriangle}\right)^2=0.
\end{equation}
Indeed, the augmented difference variational bicomplex is exact, just as in the differential case. A proof of this is outlined in the Appendix.

Let $(\p_{n^1},\p_{n^2},\ldots,\p_{n^p})$ be the duals to the difference one-forms $(\Delta^1,\Delta^2,\ldots,\Delta^p)$; the duals to the differential one-forms $\dv u^\alpha_{\mbJ}$ are $\p/\p{u^{\alpha}_{\mbJ}}$. These satisfy
\begin{equation}
\p_{n^i}\intp\Delta^j=\delta_i^j,\qquad
\p_{n^i}\intp\dv 
u^{\alpha}_{\mbJ}=0,\qquad
\frac{\partial}{\partial u^{\alpha}_{\mbJ}}\intp \Delta^j=0,\qquad \frac{\partial}{\partial u^{\alpha}_{\mbJ}}\intp \dv u^\beta_\mbK=\delta^\beta_\alpha\delta^\mbJ_\mbK.
\end{equation}

For difference equations, the base space $\mathbb{Z}^p$ is discrete, so every (tangent) vector field is vertical. A locally smooth vector field $\mbv_0=Q^\alpha\,\p/\p u^\alpha$ on the total space, prolonged to all orders, yields the vector field $\mbv=\es_{\mbJ}\!Q^\alpha\,\p/\p u^\alpha_{\mbJ}$ on $\mathbb{Z}^p\times P(\mathbb{R}^q)$. In much the same way as for differential equations, this prolongation formula also applies when each $Q^\alpha$ depends on $\mbn$ and finitely many shifts of $\mbu$, in which case $\mbv$ is a generalized vector field on $\mathbb{Z}^p\times P(\mathbb{R}^q)$; moreover,  $\mbv$ commutes with each $\es_i$ \cite{Hy2014}. The $q$-tuple $(Q^1,Q^2,\dots,Q^q)$ that determines the generalized vector field $\mbv$ is called its \textit{characteristic}.

The Lie derivative of a $(k,l)$-form $\sigma$ with respect to a generalized vector field $\mbv$ on $\mathbb{Z}^p\times P(\mathbb{R}^q)$ is, by Cartan's formula,
\begin{equation}
\mcL_\mbv\sigma=\mbv\intp\dv \;\!\!\sigma+\dv\,(\mbv\intp\sigma).
\end{equation}
(See also the definition \eqref{lieddiscrete} through the corresponding transformation.)
This mirrors the differential case, as do the proofs of the following results.
\begin{prop}\label{prop:symmetry}
	Let $\sigma$ be a $(k,l)$-form on $\mathbb{Z}^p\times P(\mathbb{R}^q)$. If $\mbv$ is a generalized vector field on $\mathbb{Z}^p\times P(\mathbb{R}^q)$ then
	\[
	\mbv\intp\dDh\;\!\!\sigma+\dDh(\mbv\intp\sigma)=0,
	\]
	so
	\[
	\mcL_\mbv \sigma=\mbv\intp\dD\;\!\!\sigma+\dD(\mbv\intp\sigma).	
	\]
	Furthermore,
	\[
	\p_{n^i}\intp\dv \;\!\!\sigma+\dv\,(\p_{n^i}\intp\sigma)=0,
	\]
	and the Lie difference of $\sigma$ with respect to the horizontal translation $\te_{\mathbf{1}_i}$ satisfies the identity
	\[
	\diffi\sigma=\p_{n^i}\intp\dDh\;\!\!\sigma+\dDh(\p_{n^i}\intp\sigma).
	\]
	Therefore,
	\[
	\diffi\sigma=\p_{n^i}\intp\dD\;\!\!\sigma+\dD(\p_{n^i}\intp\sigma).
	\]
\end{prop}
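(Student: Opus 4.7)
My plan is to verify each identity by direct computation on the general $(k,l)$-form \eqref{eq:klform}, using two preparatory structural facts. First, because the base $\mathbb{Z}^p$ is discrete, every generalized vector field $\mbv$ on $\mathbb{Z}^p\times P(\mathbb{R}^q)$ is vertical, so $\mbv\intp\Delta^j=0$. Second, both $\mbv\intp$ and $\p_{n^i}\intp$ commute with every shift $\es_{\mbK}$, which I would establish on basis $1$-forms: $\mbv\intp\es_{\mbK}\Delta^j=0=\es_{\mbK}(\mbv\intp\Delta^j)$ since $\es_{\mbK}\Delta^j=\Delta^j$ by \eqref{sdel}, while $\mbv\intp\es_{\mbK}\dv u^{\alpha}_{\mbJ}=\es_{\mbJ+\mbK}Q^{\alpha}=\es_{\mbK}(\mbv\intp\dv u^{\alpha}_{\mbJ})$ from the prolongation formula, and similarly for $\p_{n^i}\intp$ using $\p_{n^i}\intp\Delta^j=\delta_i^j$ and $\p_{n^i}\intp\dv u^{\alpha}_{\mbJ}=0$. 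Since $\mbv\intp$ and $\p_{n^i}\intp$ are anti-derivations while $\es_{\mbK}$ is an algebra homomorphism (first identity in \eqref{sids}), the commutation extends to all of $\Omega^{k,l}$; in particular each of these interior products commutes with every Lie difference $\diffn_j=\es_j-\id$.

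With this in hand, the first identity follows from a one-line calculation using the anti-derivation rule and $\mbv\intp\Delta^i=0$:
\begin{equation*}
\mbv\intp\dDh\sigma=\mbv\intp(\Delta^i\wedge\diffi\sigma)=-\Delta^i\wedge\diffi(\mbv\intp\sigma)=-\dDh(\mbv\intp\sigma).
\end{equation*}
Adding this to the Cartan-formula definition of $\mcL_{\mbv}\sigma$ and using $\dD=\dDh+\dv$ yields the second identity. For the third identity I would write a basis element as $\sigma=f\cdot\omega_{\uph}\wedge\omega_{\upv}$ with $\omega_{\uph}$ a wedge of $\Delta^{i_r}$ and $\omega_{\upv}$ a wedge of $\dv u^{\alpha_s}_{\mbJ_s}$; then $\dv\sigma=\dv f\wedge\omega_{\uph}\wedge\omega_{\upv}$, whereas $\dv(\p_{n^i}\intp\sigma)=\dv f\wedge(\p_{n^i}\intp\omega_{\uph})\wedge\omega_{\upv}$, and the $(-1)$ produced when $\p_{n^i}\intp$ is anti-derived past the vertical $1$-form $\dv f$ (which it annihilates) supplies the cancellation.

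For the fourth identity, the same anti-derivation argument applied to $\dDh\sigma=\Delta^j\wedge\diffn_j\sigma$ gives
\begin{equation*}
\p_{n^i}\intp\dDh\sigma=(\p_{n^i}\intp\Delta^j)\wedge\diffn_j\sigma-\Delta^j\wedge\diffn_j(\p_{n^i}\intp\sigma)=\diffi\sigma-\dDh(\p_{n^i}\intp\sigma),
\end{equation*}
which rearranges to the stated Cartan-type formula for the Lie difference; the final identity is the sum of the third and fourth together with $\dD=\dDh+\dv$. The only genuinely delicate point is the sign bookkeeping when pulling $\mbv\intp$ or $\p_{n^i}\intp$ past the distinguished $1$-forms $\Delta^i$ and $\dv f$, and verifying that these interior products commute with shifts (and hence with $\diffi$); once these preliminaries are in place, each identity is a short direct check on a generic basis element of $\Omega^{k,l}$.
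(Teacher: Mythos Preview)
Your proposal is correct and follows precisely the approach the paper has in mind: the paper does not give an explicit proof of this proposition, stating only that ``this mirrors the differential case, as do the proofs of the following results,'' and the differential analogue is likewise described as arising from ``routine calculations.'' Your direct verification on basis $(k,l)$-forms, together with the key preliminary that $\mbv\intp$ and $\p_{n^i}\intp$ commute with every shift $\es_{\mbK}$ (and hence with each $\diffi$), is exactly what those routine calculations amount to.
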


Remarkably, both the (vertical) Lie derivative and (horizontal) Lie difference satisfy a formula that is similar to Cartan's, with the exterior difference-derivative on $\mathbb{Z}^p\times P(\mathbb{R}^q)$ replacing the exterior derivative on $J^{\infty}(X\times U)$.

Let
$\msL[\mbu]=L(\mbn,[\mbu])\text{vol}$ be a given Lagrangian difference form, with Lagrangian $L(\mbn,[\mbu])$.
A generalized vector field $\mbv$ on $\mathbb{Z}^p\times P(\mathbb{R}^q)$ is a
\textit{variational symmetry generator} if $\mbv(L)$ is a null Lagrangian, that is, if
there exist functions $F^i(\mbn,[\mbu])$ such that
\begin{equation}
\mbv(L)=\diffi F^i(\mbn,[\mbu]).
\end{equation}
Equivalently, $\mbv$ is a variational symmetry generator if there exists
$\sigma(\mbv)\in\Omega^{p-1,0}$ such that
\begin{equation}\label{eq:vs}
\mbv\intp\dv \msL
=\dDh\!\sigma(\mbv);
\end{equation}
in coordinates,
\begin{equation}
\sigma(\mbv) =F^i(\mbn,[\mbu])\,\p_{n^i}\intp\text{vol}.
\end{equation}

As $\mcI^{\vartriangle}$ is a projection and the difference variational bicomplex is exact, a difference version of equality \eqref{eq:ii} holds. For each $\sigma\in\Omega^{p,l}$, there exists $\tau\in\Omega^{p-1,l}$ such that 
\begin{equation}\label{eq:dissig}
\sigma=\mcI^{\vartriangle}(\sigma)-\upd_{\uph}^{\vartriangle}\!\tau.
\end{equation}
In particular, for $\sigma=\dv \msL$ there exists $\eta\in\Omega^{p-1,1}$ such that
\begin{equation}\label{eq:dvLag}
\dv \msL=\mcE^{\vartriangle}(\msL)
-\dDh\!\eta.
\end{equation}
By using \eqref{eq:vs}, \eqref{eq:dvLag} and Proposition \ref{prop:symmetry}, we obtain
\begin{equation}\label{eq:ddcl}
\dDh\!\left(\sigma(\mbv)-\mbv\intp\eta\right)
=\mbv\intp\mcE^{\vartriangle}(\msL),
\end{equation}
which gives a conservation law \label{sectiondvb}
\begin{equation}
\label{eq:Noecl}
\dDh\!\left(\sigma(\mbv)-\mbv\intp\eta\right)=0
\text{~~on solutions of~~}
\mcE^{\vartriangle}(\msL)=0.
\end{equation}
The conservation law \eqref{eq:Noecl} is a coordinate-free version of the difference conservation law obtained by Noether's (First) Theorem; its differential counterpart was proved in \cite{BrHyLa2010}. (See \cite{Ko2011} for a comprehensive history of Noether's theorems on variational symmetries.)

\section{Discrete mechanics via the difference variational bicomplex}
\label{sec:hasy}
In \cite{MaWe2001} and references therein, discrete mechanics is
developed using the standard approach in classical mechanics, that is,
by studying the discrete equations of motion on a manifold equipped with a
closed nondegenerate two-form. In this section, we apply the augmented difference  variational
bicomplex to discrete mechanics, with base space $\mathbb{Z}$ and the fibre (in total space) $\mathbb{R}^2$ (for simplicity). In the usual mechanics notation\footnote{Throughout this section only, $p$ and $q$ are real-valued variables, not dimensions (which are given).}, let $(n,q,p)$
be the coordinates on the total space $\mathbb{Z}\times \mathbb{R}^2$; let $\es$ be the forward shift in $n$ and the forward difference operator be $\diffn=\es-\id$.

Consider the following $(0,2)$-form, which is vertically-closed and nondegenerate:
\begin{equation}
\omega =\dv p\wedge \dv q.
\end{equation}
This gives each fibre in the total space the structure of a symplectic manifold. Suppose that
the horizontal translation map $\te_1:(n,p,q)\mapsto(n+1,p,q)$ is a symplectomorphism, so that $\te_1^*\omega_{n+1}=\omega_n$. In the prolongation space $\mathbb{Z}\times P(\mathbb{R}^2)$, this condition amounts to $\diffn\omega=0$, that is,
\[
\dv p_1\wedge \dv q_1-\dv p_0\wedge \dv q_0=0.
\]
As the augmented difference  variational bicomplex is exact, there exists a Hamiltonian function $H$ on $\mathbb{Z}\times P(\mathbb{R}^2)$ that satisfies\footnote{One may alternatively choose the Hamiltonian such that
\begin{equation*}
    (p_1-p_0) \dv q_1-(q_1-q_0) \dv p_0 =-\dv H(n,q_1,p_0),
\end{equation*}
and the corresponding symplectic map becomes the Euler-A discretization method.
}
\begin{equation}
(p_1-p_0) \dv q_0-(q_1-q_0) \dv p_1 =-\dv H.
\end{equation}
Consequently, $H$ is a function of $(n,q_0,p_1)$ only. In coordinates, the
symplectic map is
\begin{equation}
\label{eq:flow}
q_1-q_0=\frac{\partial H(n,q_0,p_1)}{\partial p_1},\qquad
p_1-p_0=-\,\frac{\partial H(n,q_0,p_1)}{\partial q_0}.
\end{equation}
With the step-length incorporated into $H$, this is the Euler-B discretization method for a continuous
Hamiltonian system; see \cite{LeRe2004}.

Reversing the above argument (with $p_{-1}$ replacing $p_0$), one can start with a Hamiltonian $H(n,p,q)$ defined on the total space and apply the map
\begin{equation}
\label{eq:flow1}
q_1-q_0=\frac{\partial H(n,q_0,p_0)}{\partial p_0},\qquad
p_0-p_{-1}=-\,\frac{\partial H(n,q_0,p_0)}{\partial q_0}
\end{equation}
on each $P_n(\mathbb{R}^2)$. Then
\[
(p_0-p_{-1}) \dv q_0-(q_1-q_{0}) \dv p_0 =-\dv H,
\]
so the map preserves the symplectic $(0,2)$-form
\begin{equation}\label{om1std}
\omega =\dv p_{-1}\wedge \dv q_0
\end{equation}
on $\mathbb{Z}\times P(\mathbb{R}^2)$. This approach has the advantage that
(similarly to the corresponding continuous case) the symplectic map \eqref{eq:flow1} can be written in terms of a self-adjoint matrix operator:
\begin{equation}
    \left(
\begin{array}{cc}
  0&-(\id-\es^{-1})\\
  \es-\id&0
\end{array}
    \right)
      \left(
\begin{array}{c}
  q_0\\
  p_0
\end{array}
    \right)=
         \left(
    \begin{array}{c}
    \partial H(n,q_0,p_0)/\partial q_0\\
    \partial H(n,q_0,p_0)/\partial p_0
    \end{array}
    \right).
\end{equation}

The system \eqref{eq:flow1} amounts to the Euler--Lagrange equations for the Lagrangian $(1,0)$-form
\[
\msL=\{p_0(q_1-q_0)-H(n,q_0,p_0)\}\Delta_n.
\]
Specifically,
\[
\mcE^{\vartriangle}(\msL)=\dv \msL+\upd_{\uph}^{\vartriangle}\!\eta,\qquad \eta=p_{-1}\dv q_0.
\]
So $\omega=\dv \eta$.

For symplectic difference maps in general, at most one of the Hamiltonian function and the symplectic form is defined on the total space, so it is essential to work in an appropriate prolongation space.

So far, we have worked mainly in terms of the given coordinates. For an entirely coordinate-free formulation, the exterior difference operator $\dDh$ is used in place of the Lie difference $\diffn$. This construction is easily extended to mechanical systems with higher-dimensional fibres. We now generalize it to higher-dimensional base spaces, with application to multisymplectic P$\Delta$Es.
\section{Multisymplectic systems of P$\Delta$Es and the bicomplex}
\label{sec:mupde}

A system of \pde{s} is multisymplectic if there exists a vertically-closed
$(p-1,2)$-form, $\omega$, that satisfies the multisymplectic conservation law $\dDh\omega=0$ on all solutions of the system. As $\omega$ is vertically-closed, there exists an $\eta\in\Omega^{p-1,1}$ such that
\begin{equation}
\dv \eta=\omega.
\end{equation}
Consequently, on all solutions of the given
system,
\begin{equation}
\dv \dDh\!\eta=
-\dDh\!\dv \eta=
-\dDh\!\omega=0.
\end{equation}
This implies the existence (locally) of an \textit{associated Lagrangian form},
$\msL\in\Omega^{p,0}$, such that on all solutions,
\begin{equation}\label{etalag}
\dDh\!\eta=-\dv \msL.
\end{equation}
Therefore, as
\begin{equation}
\mcE^{\vartriangle}(\msL)=\mcI^{\vartriangle}(\dv 
\msL)=\mcI^{\vartriangle}(-\dDh\!\eta)=0~~~~(\text{on
solutions}),
\end{equation}
every solution of the system satisfies the Euler--Lagrange equations arising from the associated Lagrangian form $\msL$. However, just as in the continuous case, the Euler--Lagrange equations may have other solutions.

Given a first-order
Lagrangian quasilinear $(p,0)$-form,
\begin{equation}
\begin{aligned}
\label{lp}
\msL&=L_{\beta}^i(\mbn,\mbu)\dDh\!u^{\beta}
\wedge(\partial_{n^i}\intp\volf)-H(\mbn,\mbu)
\volf\\
&=
\left\{L_{\beta}^i(\mbn,\mbu)\left(\diffi u^{\beta}\right)-H(\mbn,\mbu)\right\}
\volf,
\end{aligned}
\end{equation}
the Euler--Lagrange equations are
\begin{equation}
\label{el} 
\frac{\partial L_{\beta}^i(\mbn,\mbu)}{\partial
u^{\alpha}}\diffi u^{\beta}+
\sum_i{(\es_i^{-1}-\id)L_{\alpha}^i(\mbn,\mbu)}-\frac{\partial
H(\mbn,\mbu)}{\partial u^{\alpha}}=0.
\end{equation}
The following $(p-1,1)$-form $\eta$ satisfies
\eqref{eq:dvLag}:
\begin{equation}\label{diseta}
\eta=\sum_{i}\Big(\left(\es_i^{-1}L_{\alpha}^i(\mbn,\mbu)\right)\dv 
u^{\alpha}\wedge(\partial_{n^i}\intp\volf)\Big).
\end{equation}
It leads to the multisymplectic $(p-1,2)$-form,
\begin{equation}\label{olol}
\omega=\dv \eta
=\sum_{i}\left(\left(\es_i^{-1}\frac{\partial
L_{\alpha}^i(\mbn,\mbu)}{\partial u^{\beta}}\right)
\dv\,(\es_i^{-1}\!u^{\beta})
\wedge\dv u^{\alpha}\wedge
(\partial_{n^i}\intp\volf)\right).
\end{equation}
From \eqref{eq:dvLag},
\begin{equation}
    \dDh\omega=-\dv \dDh\eta=-\dv\,(\mcE^{\vartriangle}(\msL)),
\end{equation}
which is zero on all solutions of the Euler--Lagrange equations \eqref{el}. So \eqref{el} is a multisymplectic system of difference equations and $\msL$ is an associated Lagrangian form for $\omega$.

By a similar construction, the Euler--Lagrange equations for any difference Lagrangian constitute a multisymplectic system of \pde{s}. However, we will restrict attention to first-order quasilinear Lagrangian forms \eqref{lp}, as every more complicated Lagrangian $(p,0)$-form can be replaced by an equivalent first-order quasilinear form. For this reason, we regard \eqref{el} as the \textit{standard form} for a multisymplectic system of \pde{s}, with \eqref{diseta} and \eqref{olol} as the standard forms for $\eta$ and $\omega$ respectively.

\section{Multimomentum maps and conservation laws}
\label{sec:mommap}

By definition, every multisymplectic system has a form-valued conservation law, $\dDh\omega=0$ on solutions.
We now define multimomentum  maps for a given multisymplectic
system of \pde{s}; such maps can be used to derive scalar conservation laws. Let $G$ denote a Lie group of point transformations that preserve
the $(p-1,2)$-form $\omega$ on solutions of the system, and let $\mathfrak{g}$ be the associated Lie algebra. For each $\xi\in \mathfrak{g}$, the associated infinitesimal generator is
\begin{equation}
\mbv_{\xi}:=\frac{\upd}{\upd\!\varepsilon}\Big|_{\varepsilon=0}\exp(\varepsilon\xi)(\mbn,\mbu),
\end{equation}
and its characteristic has components $Q^{\alpha}=\mbv_{\xi}\intp\dv u^{\alpha}$. For convenience, we also use $\mbv_{\xi}$ to denote the prolongation of the infinitesimal generator to all orders, so that it gives the infinitesimal action on all forms on $\mathbb{Z}^p\times P(\mathbb{R}^q)$. The preservation of $\omega$ amounts to 
\begin{equation}   \dv\,(\mbv_\xi\intp\omega)=\mcL_{\mbv_\xi}\omega=0\quad\text{on solutions}.
\end{equation}
Indeed, this condition can be solved to find all infinitesimal generators of a given class\footnote{We have restricted attention to Lie point symmetries, for which there is a well-defined Lie group action. However, the calculations extend immediately to generalized symmetries, just as in Noether's Theorem.} that preserve $\omega$. As $\mbv_{\xi}\intp\omega$ is vertically exact, there exists a $\lambda_{\xi}\in\Omega^{p-1,0}$ such
that, on solutions,
\begin{equation}
\label{eq5}
\mbv_{\xi}\intp\omega=\dv \lambda_{\xi}.
\end{equation}
Let $\mathfrak{g}^{\ast}$ be the dual space of the Lie algebra
$\mathfrak{g}$. Then, by analogy with the continuous case, we define a difference multimomentum map
$J:\mathbb{Z}^p\times P(\mathbb{R}^q)\to\mathfrak{g}^{\ast}\otimes\Omega^{p-1,0}$
by
\begin{equation}
\label{mm} J(\mbn,[\mbu])(\xi)=\lambda_{\xi}(\mbn,[\mbu]).
\end{equation}
Now suppose that the $\msL$ is a Lagrangian form for the Euler--Lagrange equations corresponding to $\omega$, and that $\eta$ satisfies \eqref{etalag}. If $\mbv_\xi$ generates variational symmetries, the argument leading to Noether's finite difference conservation law \eqref{eq:Noecl}
gives the difference multimomentum map,
\begin{equation}
\lambda_{\xi}=\sigma(\mbv_{\xi})-\mbv_{\xi}\intp\eta.
\end{equation}
To summarize, we have shown that $\lambda_{\xi}$ is a conservation law if
\begin{equation}
\begin{aligned}
\label{con}
&\dv \lambda_{\xi}=\mbv_{\xi}\intp\omega,\\
&\dDh\!\lambda_{\xi}=\mbv_{\xi}\intp\mcE^{\vartriangle}(\msL),
\end{aligned}
\end{equation}
the second of which is required for $\mbv_{\xi}$ to generate variational symmetries, by \eqref{eq:ddcl}.
\begin{thm}
Given a multisymplectic system whose associated Lagrangian form is $\msL$, any discrete
multimomentum map $\lambda_{\xi}(\mbn,[\mbu])$ satisfying
(\ref{con}) gives rise to a conservation law
$\dDh\!\lambda_{\xi}=0$
on all solutions of the system.
\end{thm}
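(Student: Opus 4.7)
The plan is to unpack the second equation of (\ref{con}) and observe that its right-hand side vanishes on the solution submanifold of (\ref{el}). The identity $\dDh\lambda_\xi=\mbv_\xi\intp\mcE^\vartriangle(\msL)$ is precisely the mechanism by which variational symmetries produce conservation laws in the difference setting, as established in the derivation of \eqref{eq:ddcl}. Thus the theorem follows almost immediately once we verify that $\mbv_\xi\intp\mcE^\vartriangle(\msL)$ is zero whenever the Euler--Lagrange equations hold.

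First I would recall from Section \ref{sec:dvb} that for a Lagrangian $(p,0)$-form $\msL$ one has $\mcE^\vartriangle(\msL)=\es_{-\mbJ}(\partial L/\partial u^\alpha_\mbJ)\,\dv\! u^\alpha\wedge\volf$, so that the coefficients of $\mcE^\vartriangle(\msL)$ are exactly the difference Euler--Lagrange expressions appearing in \eqref{el}. Since the generalized vector field $\mbv_\xi$ has characteristic $Q^\alpha=\mbv_\xi\intp\dv\! u^\alpha$, contracting gives
\begin{equation*}
\mbv_\xi\intp\mcE^\vartriangle(\msL)=Q^\alpha\,\es_{-\mbJ}\!\left(\frac{\partial L}{\partial u^\alpha_\mbJ}\right)\volf,
\end{equation*}
which is precisely the bracketed expression in the local representation \eqref{cll} multiplied by $Q^\alpha$ and $\volf$. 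On every solution of the difference Euler--Lagrange equations \eqref{el}, these coefficients all vanish.

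Second, using the second equation of (\ref{con}), I would conclude $\dDh\lambda_\xi=0$ on the solution submanifold. In local coordinates this is equivalent to $\sum_i \diffi\lambda_\xi^i=0$, matching \eqref{cll} once the bracketed term is set to zero. This is the content of the claimed conservation law.

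There is no substantive obstacle: the theorem is essentially a tautological consequence of how the conditions (\ref{con}) were built from \eqref{eq:ddcl}. The one subtle point worth emphasizing in the write-up is that the argument relies on $\mbv_\xi$ being a prolonged (generalized) vector field so that the contraction $\mbv_\xi\intp\mcE^\vartriangle(\msL)$ only involves the characteristic $Q^\alpha$ paired against the Euler--Lagrange coefficients; no extra shifts appear, and so the vanishing is pointwise on the infinite system of shifted copies of \eqref{el} that define the solution submanifold.
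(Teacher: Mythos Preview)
Your proposal is correct and matches the paper's own reasoning: the theorem is stated in the paper without a separate proof because it is an immediate consequence of the second condition in (\ref{con}), namely $\dDh\lambda_\xi=\mbv_\xi\intp\mcE^\vartriangle(\msL)$, whose right-hand side vanishes on solutions of (\ref{el}). Your unpacking of the contraction and the remark about the prolonged characteristic are accurate and add nothing beyond what the surrounding text already establishes.
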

If $\omega$ and $\eta$ are in standard form then 
the second condition in (\ref{con}) amounts to
\begin{equation}
\label{cll} 
\diffi \lambda_{\xi}^i-Q^{\alpha}
\left(\frac{\partial L_{\beta}^i(\mbn,\mbu)}{\partial
u^{\alpha}}\diffi u^{\beta}
+\sum_i(\es_i^{-1}-\id)L_{\alpha}^i(\mbn,\mbu)-\frac{\partial
H(\mbn,\mbu)}{\partial u^{\alpha}}\right)=0,
\end{equation}
where $\lambda_{\xi}^i$ are the components of
\begin{equation}
\lambda_{\xi}=\lambda_{\xi}^i(\mbn,[\mbu])\partial_{n^i}\intp\volf.
\end{equation}

\begin{exm}\label{exm51new}
To illustrate the above, consider the multisymplectic system of P$\Delta$Es defined on $\mathbb{Z}^2\times P(\mathbb{R}^2)$ by the Lagrangian form
\begin{equation}
    \msL=\left(uu_{1,0}+vv_{1,0}-2uv_{0,1}+\frac{u}{v}\right)\volf,
\end{equation}
namely,
\begin{equation}\label{ex1pdes}
    u_{1,0}+u_{-1,0}-2v_{0,1}+\frac{1}{v}=0,\qquad v_{1,0}+v_{-1,0}-2u_{0,-1}-\frac{u}{v^2}=0.
\end{equation}
These P$\Delta$Es are the components of
\begin{equation}
    \mcE^{\vartriangle}(\msL)=\dv \msL+\dDh\!\left\{2u_{0,-1}\dv v\wedge\Delta^1+(u_{-1,0}\dv u+v_{-1,0}\dv v)\wedge\Delta^2\right\},
\end{equation}
and therefore,
\begin{equation}
    \eta=2u_{0,-1}\dv v\wedge\Delta^1+(u_{-1,0}\dv u+v_{-1,0}\dv v)\wedge\Delta^2.
\end{equation}
Consequently, the following vertically-closed multisymplectic $(2,1)$-form is conserved on all solutions of \eqref{ex1pdes}:
\begin{equation}
    \omega=2\dv u_{0,-1}\wedge\dv v\wedge\Delta^1+(\dv u_{-1,0}\wedge\dv u+\dv v_{-1,0}\wedge\dv v)\wedge\Delta^2.
\end{equation}
To find the multimomentum map(s) admitted by \eqref{ex1pdes}, first note that if $\mbv_\xi$ is a (prolonged) point symmetry generator whose characteristic is $(Q^u(\mbn,u,v),Q^v(\mbn,u,v))$, 
\begin{equation}\label{ex1dvlam}   
    \mbv_\xi\intp\omega=\left(2Q^u_{0,-1}\dv v-2Q^v\dv u_{0,-1}\right)\wedge\Delta^1+\left(Q^u_{-1,0}\dv u-Q^u\dv u_{-1,0}+Q^v_{-1,0}\dv v-Q^v\dv v_{-1,0}\right)\wedge\Delta^2.
\end{equation}
The coefficients of the $(2,1)$-form condition $\dv \,(\mbv_\xi\intp\omega)=0$ constrain the characteristic as follows. The coefficient multiplying $\dv u\wedge\dv u_{0,-1}\wedge\Delta^1$ is $-2\partial Q^v/\partial u$, so $Q^v$ is independent of $u$. Similarly, $Q^u$ is independent of $v$. Taking these results into account, the remaining coefficients yield the conditions
\begin{equation}
    \left(\es_1^{-1}+\id\right)\frac{\p Q^u}{\p u}=0,\qquad\left(\es_1^{-1}+\id\right)\frac{\p Q^v}{\p v}=0,\qquad\es_2^{-1}\!\left(\frac{\p Q^u}{\p u}\right)+\frac{\p Q^v}{\p v}=0.
\end{equation}
These conditions imply that the components of the characteristic are of the form
\begin{equation}
    Q^u=a(\mbn)u+b(\mbn),\qquad Q^v=-a_{0,-1}v+c(\mbn),\qquad \text{subject to } a_{-1,0}=-a,
\end{equation}
for some functions $a,b,c$. From
\eqref{ex1dvlam}, the first condition in \eqref{con} is solved by
\begin{equation}
\begin{aligned}
    \lambda_\xi=&\left\{2a_{0,-1}u_{0,-1}v+2b_{0,-1}v-2cu_{0,-1}\right\}\Delta^1\\&+\left\{-au_{-1,0}u+b_{-1,0}u-bu_{-1,0}+a_{0,-1}v_{-1,0}v+c_{-1,0}v-cv_{-1,0}\right\}\Delta^2.
    \end{aligned}
\end{equation}
The second condition in \eqref{con} amounts to the further constraints
\begin{equation}
    a_{0,-1}=-a,\qquad b=c=0,
\end{equation}
and so $a=k(-1)^{n^1+n^2}$, where $k$ is an arbitrary constant. Consequently, the Lie algebra of infinitesimal generators is one-dimensional. The characteristic
\begin{equation}
\left(Q^u,Q^v\right)=\left((-1)^{n^1+n^2}u,(-1)^{n^1+n^2}v\right)    
\end{equation}
corresponds to the multimomentum map
\begin{equation}
    \lambda_\xi=2(-1)^{n^1+n^2+1}u_{0,-1}v\,\Delta^1+\left\{(-1)^{n^1+n^2+1}\left(u_{-1,0}u+v_{-1,0}v\right)\right\}\Delta^2,
\end{equation}
which gives the conservation law
\begin{equation}
    \diff{1}\!\left\{(-1)^{n^1+n^2+1}\left(u_{-1,0}u+v_{-1,0}v\right)\right\}+\diff{2}\!\left\{2(-1)^{n^1+n^2}u_{0,-1}v\right\}=0.
\end{equation}
\end{exm}
    
\begin{exm}\label{exm52}

This example illustrates how to classify all multimomentum maps admitted by a given class of Lagrangian forms. Consider a multisymplectic P$\Delta$E in the class defined on $\mathbb{Z}^2\times P(\mathbb{R})$ by a first-order quasilinear Lagrangian form, \eqref{lp}. For simplicity, we restrict attention to P$\Delta$Es that have no terms depending on $\mbn$ only. In coordinates, with $\mbn=(n^1,n^2)$,
\begin{equation}
\msL =\left\{L^1(\mbn,u)\diff{1}u+L^2(\mbn,u)\diff{2}u-H(\mbn,u)\right\}\volf,
\end{equation}
where $L^1$ and $L^2$ have no terms that depend on $\mbn$ only. The multisymplectic P$\Delta$E is given by $\mcE^{\vartriangle}(\msL)=0$, where
\begin{align}
\mcE^{\vartriangle}(\msL)=&\left(\frac{\partial L^1}{\partial u}\diff{1}u+\frac{\partial L^2}{\partial u}\diff{2}u-\diff{1}(\es_{1}^{-1}\!L^1)-\diff{2}(\es_{2}^{-1}\!L^2)-\frac{\partial H}{\partial u}\right)\dv u\wedge \volf\\
=& \dv \msL +\dDh\left\{\left(\es_1^{-1}\! L^1\right)\dv u\wedge \Delta^2-\left(\es_1^{-1}\!L^2\right)\dv u\wedge \Delta^1\right\}.\nonumber
\end{align}
Consequently,
\begin{equation}
    \eta=\left(\es_1^{-1}L^1\right)\dv u\wedge \Delta^2-\left(\es_1^{-1}L^2\right)\dv u\wedge \Delta^1,
\end{equation}
and so the multisymplectic $(1,2)$-form is 
\begin{equation}
\omega = \dv \eta=\left(\es_{1}^{-1}\frac{\partial L^1}{\partial u}\right)\dv u_{-1,0}\wedge\dv 
u\wedge\Delta^2- \left(\es_{2}^{-1}\frac{\partial L^2}{\partial u}\right)\dv u_{0,-1}\wedge\dv u
\wedge\Delta^1.
\end{equation}
Let  $\mbv_{\xi}$ be a prolonged vector field whose characteristic is $Q(\mbn,u)$. The condition
$\dv \left(\mbv_{\xi}\intp\omega\right)=0$ amounts to
\begin{equation}\label{ex2pres}
\begin{aligned}
   0&=\es_1^{-1}\!\left(\frac{\partial Q}{\partial u}\frac{\partial L^1}{\partial u}+Q\,\frac{\partial^2 L^1}{\partial u^2}\right)+\frac{\partial Q}{\partial u}\es_1^{-1}\!\left(\frac{\partial L^1}{\partial u}\right),\\
   0&=\es_2^{-1}\!\left(\frac{\partial Q}{\partial u}\frac{\partial L^2}{\partial u}+Q\,\frac{\partial^2 L^2}{\partial u^2}\right)+\frac{\partial Q}{\partial u}\es_2^{-1}\!\left(\frac{\partial L^2}{\partial u}\right).
\end{aligned}
\end{equation}
Consequently, $\partial^2 Q/\partial u^2=0$, so we seek functions $a$ and $b$ such that $Q=a(\mbn)u+b(\mbn)$ is nonzero.

If $a(\mbn)=0$ then $b(\mbn)\neq 0$ and so, from \eqref{ex2pres}, $L^1$ and $L^2$ are linear in $u$. As these components are assumed to have no terms that depend on $\mbn$ only, $L^1=c(\mbn)u$ and $L^2=d(\mbn)u$ for some functions $c$ and $d$. 
Therefore,
\begin{equation}
\mbv_\xi\intp\omega=b_{-1,0}c_{-1,0}\dv u\wedge\Delta^2-bc_{-1,0}\dv u_{-1,0}\wedge\Delta^2-b_{0,-1}d_{0,-1}\dv u\wedge\Delta^1+bd_{0,-1}\dv u_{0,-1}\wedge\Delta^1,
\end{equation}
and hence, from \eqref{con},
\begin{equation}
\lambda_{\xi}=c_{-1,0}(b_{-1,0} u-bu_{-1,0})\Delta^2-d_{0,-1}(b_{0,-1} u-bu_{0,-1})\Delta^1.
\end{equation}
The second condition in \eqref{con} is satisfied only when 
\begin{equation}
H(\mbn,u)=\left(\frac{b_{1,0}c+b_{-1,0}c_{-1,0}+b_{0,1}d+b_{0,-1}d_{0,-1}}{2b}-c-d\right)u^2,
\end{equation}
so the Euler--Lagrange equation is a linear P$\Delta$E that is equivalent to the conservation law
\begin{equation}
\diff{1}\{c_{-1,0}(b_{-1,0} u-bu_{-1,0})\}+\diff{2}\{d_{0,-1}(b_{0,-1} u-bu_{0,-1})\}=0.
\end{equation}

If $a(\mbn)\neq  0$, the solution of \eqref{ex2pres} for which $L^1$ and $L^2$ have no terms that depend on $\mbu$ only is
\begin{equation}
L^1(\mbn,u)=c(\mbn)(au+b)^{-a_{1,0}/a},\qquad L^2(\mbn,u)=d(\mbn)(au+b)^{-a_{0,1}/a},
\end{equation}
where $c$ and $d$ are arbitrary functions. The first condition in \eqref{con} gives the multimomentum map
\begin{equation}\label{ex2nl}
    \lambda_\xi=-c_{-1,0}(a_{-1,0}u_{-1,0}+b_{-1,0})^{-a/a_{-1,0}}(au+b)\Delta^2+d_{0,-1}(a_{0,-1}u_{0,-1}+b_{0,-1})^{-a/a_{0,-1}}(au+b)\Delta^1.
\end{equation}
From the second condition in \eqref{con}, we obtain
\begin{equation}
H(\mbn,u)=-c(au+b)^{-a_{1,0}/a}\left(u+\frac{b_{1,0}}{a_{1,0}}\right)-d(au+b)^{-a_{0,1}/a}\left(u+\frac{b_{0,1}}{a_{0,1}}\right).
\end{equation}
From \eqref{ex2nl}, the nonlinear multisymplectic P$\Delta$E given by the Euler--Lagrange equation satisfies the conservation law
\begin{equation}
\diff{1}\!\left\{-c_{-1,0}(a_{-1,0}u_{-1,0}\!+\!b_{-1,0})^{-a/a_{-1,0}}(au\!+\!b)\right\}+\diff{2}\!\left\{-d_{0,-1}(a_{0,-1}u_{0,-1}\!+\!b_{0,-1})^{-a/a_{0,-1}}(au\!+\!b)\right\}=0.
\end{equation}
\end{exm}

\section{Multisymplectic integrators on logically-rectangular meshes}
\label{sec:multisyin}
In this section, we apply the difference variational bicomplex
on $\mathbb{Z}^p$ to multisymplectic finite difference methods on logically-rectangular meshes that need not be uniform. 

The meshpoints, $\mbx_\mbn$, are assumed to lie on a $p$-dimensional Riemannian manifold and are labelled by $\mbn\in\mathbb{Z}^p$. As before, the dependent variables are $u^\alpha,\ \alpha=1,\dots,q$. The action of the forward shift operators $\es_i$ on the meshpoints is given by $\es_i(\mbx_\mbn)=\mbx_{\mbn+\mathbf{1}_i}$, and so $\es_\mbJ(\mbx_\mbn)=\mbx_{\mbn+\mbJ}$. As before, we work on the prolongation space over a fixed $\mbn$, denoting the value of $u^\alpha$ at $\mbx_{\mbn+\mbJ}$ by $u^\alpha_\mbJ$; for simplicity, we write $u^\alpha$ in place of $u^\alpha_{\mathbf{0}}$. 

If $d(\cdot,\cdot)$ is the distance defined by the Riemannian metric, the step size between adjacent meshpoints, $\mbx_\mbn$ and $\mbx_{\mbn+\mathbf{1}_i}$, is
the length of the local geodesic
connecting them:
\begin{equation}
\epsilon_\mbn^i:=d(\mbx_{\mbn+\mathbf{1}_i},\mbx_{\mbn}).
\end{equation}
In particular, for the Euclidean space
$\mathbb{R}^p$, the step size is $\epsilon_\mbn^i=x^i_{\mbn+\mathbf{1}_i}-x^i_\mbn$. The exterior difference operator can be written as
\begin{equation}
\upd_{\uph}^{\vartriangle}=\sum_{i=1}^p\Delta^i\wedge\diffi=\sum_{i=1}^p\left(\epsilon_\mbn^i\Delta^i\right)\wedge(\epsilon_\mbn^i)^{-1}\diffi.
\end{equation}
When the limit $\epsilon_\mbn^i\rightarrow 0^+$ is taken, $\epsilon_\mbn^i\Delta^i$ tends to
$\upd\!x^i$ and
$(\epsilon_\mbn^i)^{-1}\diffi$ tends to the total
derivative  $D_i$. From this perspective, the horizontal operator $\upd_{\uph}^{\vartriangle}$ may be regarded as an approximation to the horizontal derivative $\upd_{\uph}\!=\upd\!x^i\wedge
D_i$. This makes it straightforward to study multisymplectic integrators using the difference variational bicomplex.

\begin{exm}\label{slbridges}
Consider the following semilinear scalar PDE on $\mbx=(x,t)\in\mathbb{R}^2$:
\begin{equation}\label{semil}
  u_{tt}+\varepsilon u_{xx}+V'(u)=0,\qquad\varepsilon=\pm1;
\end{equation}
here $V(u)$ is a given potential function. Bridges \& Reich \cite{BrRe2001} embedded \eqref{semil} in a first-order multisymplectic system of PDEs,
\begin{equation}
    -v_{t}-w_{x}=V'(u),\qquad u_{t}+p_{x}=v,\qquad
    u_{x}-\varepsilon p_{t}=\varepsilon w,\qquad\varepsilon
    w_{t}-v_{x}=0,
\end{equation}
which they discretized using the following St\"{o}rmer--Verlet scheme (staggered in both the $x$- and
the $t$-directions):
\begin{equation}\label{mi1}
  \begin{aligned}
    -\frac{v_{0,\frac12}-v_{0,-\frac12}}{h_t}-\frac{w_{\frac12,0}-w_{-\frac12,0}}{h_x}&=V'(u),\qquad&
    \frac{u_{0,1}-u}{h_t}+\frac{p_{\frac12,\frac12}-p_{-\frac12,\frac12}}{h_x}&=v_{0,\frac12},\\
    \frac{u_{1,0}-u}{h_x}-\varepsilon\,\frac{p_{\frac12,\frac12}-p_{\frac12,-\frac12}}{h_t}&=\varepsilon w_{\frac12,0},\qquad&
    \varepsilon\,\frac{w_{\frac12,1}-w_{\frac12,0}}{h_t}-\frac{v_{1,\frac12}-v_{0,\frac12}}{h_x}&=0,
  \end{aligned}
\end{equation}
where the step sizes $h_x=\mbx_{1,0}-\mbx$ and $h_t=\mbx_{0,1}-\mbx$ are assumed to be uniform. The centrepoint notation denotes an average, as follows: 
\begin{equation}\label{slvars}
w_{\frac12,0} = \frac{w+w_{1,0}}{2}\,, \qquad v_{0,\frac12} = \frac{v+v_{0,1}}{2}\,, \qquad p_{\frac12,\frac12}=\frac{p+p_{1,0}+p_{0,1}+p_{1,1}}{4}\,.
\end{equation}
The above scheme regards $u$ as being defined on meshpoints, whereas $w$ and $v$ are defined on edges, and $p$ is defined on areas, so $u$ together with the variables in \eqref{slvars} are taken to be the fundamental dependent variables. Shifts of these variables are denoted by $w_{\frac12,1}=\es_1\left(w_{\frac12,0}\right)$, etc. Let $\delta_x=h_x^{-1}\diff{1}$ and $\delta_t=h_t^{-1}\diff{2}$ be the scaled difference operators.
The scaled difference one-forms on the mesh are generated by
\begin{equation}
   \Delta^x=h_x\Delta^1,\qquad\Delta^t=h_t\Delta^2,
\end{equation}
where $\Delta^1$ and $\Delta^2$ are the standard difference one-forms on $\mathbb{Z}^2$.
The multisymplectic system (\ref{mi1}) is the set of Euler--Lagrange equations for the Lagrangian $(2,0)$-form
\begin{equation}
 \msL=\Bigg\{ w_{\frac12,0}\,\delta_x u-p_{\frac12,\frac12}\,\delta_x v_{0,\frac12}+v_{0,\frac12}\,\delta_t u+\varepsilon
 p_{\frac12,\frac12}\,\delta_t w_{\frac12,0}
 -\left(V(u)+\tfrac{1}{2}\left(v_{0,\frac12}\right)^2+\tfrac{\varepsilon}{2}\left(w_{\frac12,0}\right)^2\right)\Bigg\}\Delta^x\wedge\Delta^t.
\end{equation}
Consequently,
\begin{equation}
  \omega=\left(\dv w_{-\frac12,0}\wedge\dv u
  -\dv p_{-\frac12,\frac12}\wedge
  \dv v_{0,\frac12}\right)\wedge\Delta^t-\left(\dv v_{0,-\frac12}\wedge\dv u+\varepsilon
  \dv p_{\frac12,-\frac12}\wedge\dv w_{\frac12,0}\right)\wedge\Delta^x
\end{equation}
is a multisymplectic $(1,2)$-form that satisfies
\begin{equation}
  \upd_{\uph}^{\vartriangle}\!\omega=0~~\text{on solutions
  of (\ref{mi1})}.
\end{equation}

\end{exm}

\begin{exm}
In \cite{Wa2008}, Wang studied the multisymplectic formulation of the Zakharov system
\begin{equation}
  i\phi_{t}+\phi_{xx}+2\phi\psi=0,~~\psi_{tt}-\psi_{xx}+(|\phi|^2)_{xx}=0,
\end{equation}
where $\phi(x,t)$ is complex-valued and $\psi(x,t)$ is real-valued. This can be rewritten as a first-order system of PDEs by introducing several new real-valued variables:
\begin{equation}
  \begin{aligned}
    -&v_{t}+p_{x}=-2u\psi,~\quad u_{t}+q_{x}=-2v\psi,~\quad -u_{x}=-p,~\quad -v_{x}=-q,\\
    &w_{t}=\psi-u^2-v^2,~\quad -\psi_{t}+\varphi_{x}=0,~\quad -w_{x}=-\varphi.
  \end{aligned}
\end{equation}
Wang proposed an Euler box scheme that is a multisymplectic integrator. In the notation from Example \ref{slbridges}, the scheme is
\begin{equation}\label{eulerscheme}
  \begin{aligned}
    -&\delta_t v+\delta_x p=-2u\psi,~\quad 
    \delta_t u_{0,-1}+\delta_x q=-2v\psi,~\quad -\delta_x u_{-1,0}=-p,~\quad -\delta_x v_{-1,0}=-q,\\
    &\delta_t w=\psi-u^2-v^2,~\quad 
    -\delta_t \psi_{0,-1}+\delta_x \varphi_{-1,0}=0,~\quad -\delta_x w=-\varphi.
  \end{aligned}
\end{equation}
These are the Euler--Lagrange equations for the Lagrangian form
\begin{equation}
    \msL=\left\{-u\delta_t v+\psi\delta_t w+u\delta_x p
    +v\delta_x q-\varphi\delta_x w-\left(\tfrac{1}{2}\psi^2
    -\psi u^2 -\psi v^2-\tfrac{1}{2}p^2-\tfrac{1}{2}q^2-\tfrac{1}{2}\varphi^2\right)\right\}\Delta^x\wedge\Delta^t.
\end{equation}
Therefore, the conserved multisymplectic $(1,2)$-form is
\begin{equation}
  \begin{aligned}
    \omega=&\left(-\dv u_{0,-1}\wedge\dv v
    +\dv \psi_{0,-1}\wedge\dv w\right)\wedge\Delta^x\\
    &-\left(\dv u_{-1,0}\wedge\dv p+
    \dv v_{-1,0}\wedge\dv q-
    \dv \varphi_{-1,0}\wedge\dv w\right)\wedge\Delta^t.
  \end{aligned}
\end{equation}
\end{exm}

\section{Concluding remarks}
The difference variational bicomplex provides a geometric framework for the study of finite difference equations. Just as in the continuous case, it can be used to establish the theoretical foundations of multisymplectic systems of P$\Delta$Es. In particular, the coordinate-free version of Noether's Theorem leads to difference multimomentum maps. Although we have restricted attention to Lie point symmetries, the same construction goes through for generalized symmetries. For such symmetries, the characteristic depends not only on $(\mbn,\mbu)$, but also on shifts of $\mbu$.

It is easiest to frame a given multisymplectic system as Euler--Lagrange equations whose Lagrangian form is first-order and quasilinear.
This can be done by introducing extra variables using Lagrange multipliers. Having found the conserved multisymplectic form $\omega$ and any multimomentum maps, one could then use the Euler--Lagrange equations to eliminate the Lagrange multipliers, yielding the results in terms of the original variables. For instance, the Toda-type equation
\begin{equation}\label{tod}
    \frac{1}{u_{1,1}-u}-\frac{1}{u_{-1,1}-u}-\frac{1}{u_{1,-1}-u}+\frac{1}{u_{-1,-1}-u}=0
\end{equation}
is the Euler--Lagrange equation for the Lagrangian
\begin{equation}
    L=\ln\left|\frac{u_{1,0}-u_{0,1}}{u_{1,1}-u}\right| \,.
\end{equation}
By introducing the extra dependent variables $v=u_{1,0}$, $w=u_{1,0}$ and $z=u_{1,1}$, the system can be recast in standard form, with the Lagrangian form
\begin{equation}
    \msL=\left\{\rho(\diff{1}u+u-v)+\sigma(\diff{2}u+u-w)+\tau(\diff{2}v+v-z)+\ln\left|\frac{v-w}{z-u}\right|\right\}\Delta^1\wedge\Delta^2.
\end{equation}
Here $\rho,\sigma$ and $\tau$ are Lagrange multipliers. This leads to the conserved multisymplectic $(2,1)$-form
\begin{equation}\label{omeasy}
    \omega=\dv \rho_{-1,0}\wedge\dv u\wedge\Delta^2-\left(\dv \sigma_{0,-1}\wedge\dv u+\dv \tau_{0,-1}\wedge\dv v\right)\wedge\Delta^1.
\end{equation}
Using the Euler--Lagrange equations to eliminate the Lagrange multipliers, one obtains the conserved multisymplectic form for \eqref{tod}:
\begin{equation}
    \omega=\left\{\frac{\dv u_{-1,1}\wedge\dv u}{(u_{-1,1}-u)^2}-\frac{\dv u_{-1,-1}\wedge\dv u}{(u_{-1,-1}-u)^2}\right\}\wedge\Delta^2-\left\{\frac{\dv u_{1,-1}\wedge\dv u}{(u_{1,-1}-u)^2}+\frac{\dv u_{1,0}\wedge\dv u_{0,-1}}{(u_{1,0}-u_{0,-1})^2}\right\}\wedge\Delta^1.
\end{equation}
A straightforward calculation shows that $\lambda_\xi=\rho_{-1,0}u\Delta^2+(\sigma_{0,-1}u+\tau_{0,-1}v)\Delta^1$ is a multimomentum map for \eqref{omeasy}. Consequently,
\begin{equation}
    \lambda_\xi=\left\{\frac{u}{u_{-1,1}-u}-\frac{u}{u_{-1,-1}-u}\right\}\Delta^2-\left\{\frac{u}{u_{1,-1}-u}+\frac{u_{1,0}}{u_{1,0}-u_{0,-1}}\right\}\Delta^1
\end{equation}
is a multimomentum map for \eqref{tod}. This corresponds to the scaling Lie group whose characteristic is $Q=u$ and leads to the conservation law
\begin{equation}
    \diff{1}\!\left\{\frac{u}{u_{-1,1}-u}-\frac{u}{u_{-1,-1}-u}\right\}+\diff{2}\!\left\{\frac{u}{u_{1,-1}-u}+\frac{u_{1,0}}{u_{1,0}-u_{0,-1}}\right\}=0.
\end{equation}

\section*{Acknowledgements} 
LP would like to thank Timothy Grant, Robert Gray and Jing Ping Wang for valuable discussions.  The authors would like to thank the Isaac Newton Institute for Mathematical Sciences for support and hospitality during the programme {\it `Geometry, Compatibility and Structure Preservation in Computational Differential Equations'}, when some work on this paper was undertaken. This work was partially supported by JSPS KAKENHI grant number JP24K06852, JST-CREST grant number JPMJCR24Q5,  the Fukuzawa Fund and Academic Development Fund of Keio University, and EPSRC grant number EP/R014604/1.

\clearpage

\appendix

\section*{Appendix A: Exactness of the augmented difference variational bicomplex}\renewcommand{\thesection}{A} 

Exactness of the augmented  difference variational bicomplex has been partially  proved in
\cite{HyMa2004,Ku1985}; the following proofs are based on the thesis \cite{Pe2013} by  constructing homotopy operators.  See also \cite{Zh2010} for Zharinov's construction. Local exactness of the augmented difference variational bicomplex in Fig. \ref{fig:mdvb} on
$\mathbb{Z}^p\times P(\mathbb{R}^q)$ is illustrated by the
following three theorems.

\begin{thm}
For each $k=0,1,2,\ldots,p$, the vertical complex
\begin{equation*}
\Omega^{k,0}\stackrel{\dv }{\longrightarrow}
\Omega^{k,1}\stackrel{\dv }{\longrightarrow}
\Omega^{k,2}\stackrel{\dv }{\longrightarrow}
\cdots
\end{equation*}
is exact.
\end{thm}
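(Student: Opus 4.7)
The plan is to construct an explicit vertical homotopy operator, witnessing the Poincar\'e lemma fibrewise. The strategy exploits two features already visible in Section~\ref{sec:dvb}: first, $\dv$ annihilates every horizontal difference one-form $\Delta^i$, so the vertical differential acts only on the ``fibre factor'' of any $(k,l)$-form; second, the fibre $P_\mbn(\mathbb{R}^q)$, restricted to the (finitely many) coordinates that actually enter a given form, is a finite-dimensional affine space to which the classical Poincar\'e lemma applies.

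First I would decompose an arbitrary $\sigma\in\Omega^{k,l}$ (with $l\geq 1$) as
$$\sigma=\sum_I \alpha_I\wedge\beta^I,\qquad \alpha_I=\Delta^{i_1}\wedge\cdots\wedge\Delta^{i_k},$$
where $I=(i_1<\cdots<i_k)$ and each $\beta^I$ is a purely vertical $l$-form whose coefficients depend on $\mbn$ and a finite subset $[\mbu]$ of the fibre coordinates. Since $\dv\Delta^i=0$,
$$\dv\sigma=(-1)^k\sum_I \alpha_I\wedge\dv\beta^I,$$
so the hypothesis $\dv\sigma=0$, together with the pointwise linear independence of the $\alpha_I$, forces $\dv\beta^I=0$ for every $I$. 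This reduces the theorem to a family of purely vertical Poincar\'e problems, parametrised by $\mbn$ and the horizontal index $I$.

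Next I would apply the classical fibrewise Poincar\'e lemma to each $\beta^I$. Because the coefficients involve only finitely many $u^\alpha_\mbJ$, the form $\beta^I$ can be treated as an ordinary smooth $l$-form on a finite-dimensional Euclidean slice of $P_\mbn(\mathbb{R}^q)$, which is star-shaped about the origin. The standard radial homotopy, built by contracting with the Liouville field $\sum u^\alpha_\mbJ\,\partial/\partial u^\alpha_\mbJ$, rescaling via $[\mbu]\mapsto t[\mbu]$, and integrating against $t^{\,l-1}\,dt$ on $[0,1]$, then produces a vertical $(l-1)$-form $\gamma^I$ with $\dv\gamma^I=\beta^I$. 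Setting $\tau=(-1)^k\sum_I \alpha_I\wedge\gamma^I\in\Omega^{k,l-1}$ yields $\dv\tau=\sigma$, establishing exactness at $\Omega^{k,l}$.

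The principal (and rather mild) obstacle is to verify that $\gamma^I$ really lies in the paper's class of forms, namely that its coefficients remain smooth in $\mbn$ and $[\mbu]$ and still depend on only finitely many $u^\alpha_\mbJ$. Both properties are inherited from $\beta^I$: the rescaling $[\mbu]\mapsto t[\mbu]$ affects only the finitely many coordinates on which $\beta^I$ already depends, and integration of a smooth integrand over the compact interval $[0,1]$ preserves smoothness with respect to all remaining parameters (including $\mbn$, which merely labels which fibre we work on and is not differentiated). This local character is the reason the argument gives exactness on $\mathbb{Z}^p\times P_\mbn(\mathbb{R}^q)$, in direct parallel with the differential case.
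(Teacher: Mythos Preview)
Your proposal is correct and follows essentially the same approach as the paper: both invoke the classical Poincar\'e lemma on the fibre by means of the radial (Liouville) vector field $\sum u^\alpha_\mbJ\,\partial/\partial u^\alpha_\mbJ$ and its rescaling flow to build an explicit vertical homotopy operator. The only cosmetic difference is that the paper writes the homotopy directly on $(k,l)$-forms (since $\mbv$ is vertical and the flow fixes each $\Delta^i$), whereas you first peel off the horizontal factors $\alpha_I$; the two are equivalent.
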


\begin{proof}
Denote the prolongation of the vertical vector field
$u^{\alpha}\frac{\partial}{\partial u^{\alpha}}$ as
\begin{equation}
\mbv={(\es_{\mbJ}u^{\alpha})\frac{\partial}{\partial
u^{\alpha}_{\mbJ}}} ={u^{\alpha}_{\mbJ}\frac{\partial}{\partial
u^{\alpha}_{\mbJ}}},
\end{equation}
the flow generated by which on $\mathbb{Z}^p\times P(\mathbb{R}^q)$ is a
one-parameter family of diffeomorphisms
\begin{equation}
\exp(\varepsilon \mbv)(\mbn,\mbu)=(\mbn,\ldots,e^{\varepsilon}u^{\alpha}_{\mbJ},\ldots),
\end{equation}
satisfying\footnote{ The mapping $\exp$ is  called the
exponential mapping, and we claim that it is a diffeomorphism here
as the discrete parts can be considered as fixed parameters.}
\begin{equation}
\frac{\upd}{\upd\!\varepsilon}\Big|_{\varepsilon=0}\exp(\varepsilon
\mbv)(\mbn,\mbu)=\mbv.
\end{equation}
 As a consequence, naturally there exist induced push-forward and
pull-back mappings. Take any form $\sigma\in\Omega^{k,l}$. As the
transformation only occurs in the continuous parts, we can define
a derivative of $\sigma$ as\footnote{ In the continuous setting, this derivative
is exactly the Lie derivative, and we also refer to its notation and
call it the Lie derivative accordingly. It satisfies all the
properties that the canonical Lie derivative owns.}
\begin{equation}\label{lieddiscrete}
\mcL_{\mbv}\sigma=\lim_{\varepsilon\rightarrow0}\frac{\exp(\varepsilon \mbv)^{\ast}(\sigma)-\sigma}{\varepsilon}=\frac{\upd}{\upd\!\varepsilon}\Big|_{\varepsilon=0}
{\exp(\varepsilon \mbv)^{\ast}(\sigma)},
\end{equation}
which implies that
\begin{equation}
\frac{\upd}{\upd\!\varepsilon}{\exp(\varepsilon \mbv)^{\ast}(\sigma)}=
\exp(\varepsilon \mbv)^{\ast}(\mcL_\mbv\sigma).
\end{equation}

The
vertical homotopy operators
$h_{\upv}^{k,l}:\Omega^{k,l}\rightarrow\Omega^{k,l-1},l\geq1$ are defined by
\begin{equation}\label{eq:vehomo}
h^{k,l}_{\upv}(\sigma)=\int_0^1{\frac{1}{\lambda}\exp(\ln\lambda
\mbv)^{\ast}(\mbv\intp\sigma) \upd\!\lambda},
\end{equation}
such that
\begin{equation}
\begin{aligned}
\sigma=\dv\,(h_{\upv}^{k,l}(\sigma))
+h_{\upv}^{k,l+1}(\dv \sigma).
\end{aligned}
\end{equation}
The integrand in \eqref{eq:vehomo} is a smooth function of
$\lambda$ at $\lambda=0$, as $l\geq1$. This finishes the proof.
\end{proof}

\begin{thm}
For each $l\geq1$, the augmented horizontal complex
\begin{equation}
0\rightarrow
\Omega^{0,l}\stackrel{\dDh}{\longrightarrow}
\Omega^{1,l}\stackrel{\dDh}{\longrightarrow}
\cdots\stackrel{\dDh}{\longrightarrow}
\Omega^{p-1,l}\stackrel{\dDh}{\longrightarrow}
\Omega^{p,l}\stackrel{\mcI^{\vartriangle}}{\longrightarrow}
\mathscr{F}^l\rightarrow0 
\end{equation}
is exact.
\end{thm}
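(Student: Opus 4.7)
The strategy is to establish exactness by constructing horizontal homotopy operators $h^{k,l}:\Omega^{k,l}\to\Omega^{k-1,l}$ for $1\leq k\leq p$ and each $l\geq 1$, satisfying the identities
$$\sigma = \dDh h^{k,l}(\sigma) + h^{k+1,l}(\dDh\sigma), \qquad 1\leq k\leq p-1,$$
$$\sigma = \mcI^{\vartriangle}(\sigma) + \dDh h^{p,l}(\sigma).$$
Granted these, exactness at each interior position is immediate: if $\dDh\sigma=0$ for $\sigma\in\Omega^{k,l}$ with $1\leq k\leq p-1$, the first identity gives $\sigma\in\mathrm{im}(\dDh)$; if $\mcI^{\vartriangle}(\sigma)=0$ for $\sigma\in\Omega^{p,l}$, the second gives the same, while the reverse inclusions follow from the cochain conditions \eqref{coch}. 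Surjectivity of $\mcI^{\vartriangle}$ onto $\mathscr{F}^l$ is built into the definition $\mathscr{F}^l:=\mcI^{\vartriangle}(\Omega^{p,l})$. Injectivity at $\Omega^{0,l}$ (where no homotopy is available on the left) would be handled directly: expanding $\sigma\in\Omega^{0,l}$ in the standard basis \eqref{eq:klform} and applying \eqref{eq:siom} shows that the nonzero terms of $\dDh\sigma$ involve wedge products $\Delta^i\wedge\dv\!u^{\alpha_1}_{\mbJ_1+\mathbf{1}_i}\wedge\cdots$ that are linearly independent across the parameters $(i,\mbJ_1,\ldots)$, forcing every coefficient of $\sigma$ to vanish.

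For the rightmost identity I would construct $h^{p,l}$ by iterated summation by parts. The core building block is the identity
$$f(\mbn,[\mbu])\,\dv\!u^\alpha_{\mbJ}\wedge\mathrm{vol} \;=\; \es_{-\mbJ}(f)\,\dv\!u^\alpha\wedge\mathrm{vol} \;+\; \dDh\!\bigl(\text{explicit } (p-1,1)\text{-form}\bigr),$$
which is the discrete analogue of integrating a single derivative off a contact factor. Applied slot by slot to each term of $\sigma\in\Omega^{p,l}$, this converts every $\dv\!u^{\alpha_s}_{\mbJ_s}$ into $\dv\!u^{\alpha_s}$ with the full adjoint shift $\es_{-\mbJ_s}$ absorbed into the coefficient, and the accumulated $\dDh$-exact remainders collectively define $h^{p,l}(\sigma)$. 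The surviving term is then exactly $\mcI^{\vartriangle}(\sigma)$. This makes concrete the existence of $\tau$ already asserted in the paragraph containing \eqref{eq:dvLag}.

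For the interior homotopies $h^{k,l}$ with $1\leq k\leq p-1$, I would use the Euler-type vertical field $\mbv=u^\alpha_{\mbJ}\,\p/\p u^\alpha_{\mbJ}$ (which satisfies $\mcL_\mbv\sigma=l\sigma$ on every $(k,l)$-form, $l\geq 1$) together with Proposition \ref{prop:symmetry}, which gives $\diffi\sigma = \p_{n^i}\intp\dDh\sigma + \dDh(\p_{n^i}\intp\sigma)$. Combining a fibrewise vertical homotopy (analogous to \eqref{eq:vehomo}) with an interior-product construction along a chosen $n^i$-direction produces an explicit $h^{k,l}$; the verification of the homotopy identity is a direct bracket calculation using \eqref{anticom} and the decomposition $\dD=\dDh+\dv$. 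The principal obstacle is precisely this interior case: unlike in the continuous bicomplex, $\dDh$ is not a derivation and each $\diffi$ carries an intrinsic shift, so the Leibniz-style bookkeeping that underlies the continuous proof must be rephrased in the adjoint pairing $(\es_{\mbJ},\es_{-\mbJ})$, and one has to verify that the telescoping sums remain finite by exploiting the fact that any $\sigma$ depends on only finitely many $u^\alpha_{\mbJ}$. Once these homotopies are in hand, the projection property \eqref{iproj} and the cochain identities \eqref{coch}, verified routinely from \eqref{eq:dvom} and \eqref{eq:siom}, close the argument.
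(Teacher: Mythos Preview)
Your overall architecture (horizontal homotopy operators plus the projection identity at the top) matches the paper, and your treatment of the top piece $\sigma=\mcI^{\vartriangle}(\sigma)+\dDh h^{p,l}(\sigma)$ via iterated summation by parts is essentially the paper's argument for $\ker(\mcI^{\vartriangle})=\operatorname{im}(\dDh)$.

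The genuine gap is in your interior step $1\le k\le p-1$. First, the claimed identity $\mcL_\mbv\sigma=l\sigma$ for the scaling field $\mbv=u^\alpha_{\mbJ}\,\p/\p u^\alpha_{\mbJ}$ is false: the Lie derivative also acts on the coefficients, so $\mcL_\mbv\sigma=l\sigma$ only when every coefficient is homogeneous of degree zero in the $u^\alpha_{\mbJ}$. Second, and more seriously, even with the correct vertical homotopy the combination you sketch does not produce a $\dDh$-homotopy. The vertical scaling homotopy inverts $\dv$, not $\dDh$; and the Cartan-type identity from Proposition~\ref{prop:symmetry} gives $\p_{n^i}\intp\dDh\sigma+\dDh(\p_{n^i}\intp\sigma)=\diffi\sigma$, not $\sigma$. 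On $\mathbb{Z}^p$ there is no radial contraction to turn the Lie difference $\diffi$ back into the identity, so ``interior product along a chosen $n^i$-direction'' does not furnish the missing homotopy, and your proposal leaves the heart of the horizontal exactness unproved.

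The paper closes this gap differently: it introduces discrete higher Euler operators
\[
F_\alpha^{\mbJ}(\sigma)=\sum_{\mbI\supset\mbJ}\binom{\mbI}{\mbJ}\es_{-\mbI}\!\left(\frac{\p}{\p u^\alpha_{\mbI}}\intp\sigma\right),
\]
and builds the horizontal homotopy as
\[
h^{k,l}_{\uph}(\sigma)=\frac{1}{l}\sum_{\alpha,m,\mbI}\frac{|i^m|+1}{p-k+|\mbI|+1}\,(\es-\id)_{\mbI}\!\left(\dv u^\alpha\wedge F_\alpha^{\mbI+\mathbf{1}_m}(\p_{n^m}\intp\sigma)\right).
\]
This is where the hypothesis $l\ge 1$ actually enters (through $\p/\p u^\alpha_{\mbI}\intp$ and the factor $1/l$), and it is the discrete analogue of Anderson's higher Euler construction rather than a Poincar\'e-type contraction. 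Your adjoint-shift bookkeeping and finiteness remark are relevant here, but they have to be applied to these $F_\alpha^{\mbJ}$ operators, not to a vertical scaling.
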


\begin{proof}
First of all, notice that $\ker
(\mcI^{\vartriangle})=\operatorname{im}
(\dDh)$, which is an
immediate consequence of \eqref{eq:dissig}. The exactness of the
last part is implied by
$\mathscr{F}^l=\mcI^{\vartriangle}(\Omega^{p,l})$.

For any $\sigma\in\Omega^{k,l}$, define the following operators 
\begin{equation}
F_{\alpha}^{\mbJ}(\sigma)=\sum_{\mbI\supset \mbJ}\begin{pmatrix}
\mbI\\
\mbJ
\end{pmatrix}\es_{-\mbI}\left(\partial_{u_{\mbI}^{\alpha}}\intp\sigma\right).
\end{equation}
Here if we denote $\mbI=(i^1,i^2,\ldots,i^p)$ and $\mbJ=(j^1,j^2,\ldots,j^p)$,
then $\mbI\supset \mbJ$ means that $i^k\geq j^k$ for each $k$, and
\begin{equation}
\begin{pmatrix}
\mbI\\
\mbJ
\end{pmatrix}=\frac{\mbI!}{\mbJ!(\mbI- \mbJ)!}
\end{equation}
 with $\mbI!=i^1!i^2!\ldots i^p!$. 
For any $\sigma\in\Omega^{k,l}$ with $1\leq k\leq p$, we define
the horizontal homotopy operators as
\begin{equation}
h_{\uph}^{k,l}(\sigma)=\frac{1}{l}\sum_{\alpha,m,\mbI}\frac{|i^m|+1}{p-k+|\mbI|+1}(S-\id)_{\mbI}
\left(\dv u^{\alpha}\wedge
F_{\alpha}^{\mbI+\mathbf{1}_m}(\partial_{n^m}\intp\sigma)\right),
\end{equation}
which satisfy
\begin{equation}\label{eq:hohomo}
h_{\uph}^{k+1,l}(\dDh\!\sigma)
+\dDh\left(h_{\uph}^{k,l}(\sigma)\right)=\sigma.
\end{equation} 
When $k=0$, by defining $\Omega^{-1,l}=0$, i.e.,
$\partial_{n^m}\intp\sigma=0$ for each $m$,  the identity \eqref{eq:hohomo} still holds.
\end{proof}

\begin{thm}
The boundary complex\footnote{Note that Kupershmidt in \cite{Ku1985} proved the exactness around $\mcE^{\vartriangle}$. In \cite{HyMa2004}, Hydon \& Mansfield proved the exactness in the difference variational complex. }
\begin{equation}
0\rightarrow\mathbb{R}\rightarrow
\Omega^{0,0}\stackrel{\dDh}{\longrightarrow}
\Omega^{1,0}\stackrel{\dDh}{\longrightarrow}
\cdots\stackrel{\dDh}{\longrightarrow}
\Omega^{p-1,0}\stackrel{\dDh}{\longrightarrow}
\Omega^{p,0}\stackrel{\mcE^{\vartriangle}}{\longrightarrow}
\mathscr{F}^1\stackrel{\delta^{\vartriangle}_{\upv}}{\longrightarrow}
\mathscr{F}^2\stackrel{\delta^{\vartriangle}_{\upv}}{\longrightarrow}
\cdots
\end{equation}
is exact.
\end{thm}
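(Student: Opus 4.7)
The plan is to prove exactness position-by-position along the boundary complex via a tic-tac-toe diagram chase. The tools are the horizontal and vertical homotopy operators $h_{\uph}^{k,l}$ and $h_{\upv}^{k,l}$ (both available only for $l\ge 1$) constructed in the previous two theorems, the anti-commutation identity $\dDh\dv=-\dv\dDh$ of Lemma~\ref{lemdD}, the projection properties $(\mcI^{\vartriangle})^2=\mcI^{\vartriangle}$ and $\mcI^{\vartriangle}\dDh=0$ from \eqref{iproj}--\eqref{coch}, and the difference analogue of \eqref{eq:ii}, namely that for any $\sigma\in\Omega^{p,l}$ with $l\ge 1$ there exists $\tau\in\Omega^{p-1,l}$ with $\sigma=\mcI^{\vartriangle}(\sigma)-\dDh\tau$. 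The two endpoints are handled directly: $\mathbb{R}\hookrightarrow\Omega^{0,0}$ is tautologically injective, and at $\Omega^{0,0}$ the condition $\dDh f=0$ forces $\es_i f=f$ for every $i$; since $\es_i$ also permutes the jet indices, $f$ must be independent of every $u^{\alpha}_{\mbJ}$, hence a function of $\mbn$ alone, and then shift-invariance in $\mbn$ gives $f\in\mathbb{R}$.

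For exactness at $\Omega^{k,0}$ with $1\le k\le p-1$, I would run the following zig-zag. Start from $\sigma\in\Omega^{k,0}$ with $\dDh\sigma=0$ and apply $\dv$: anti-commutativity gives $\dDh(\dv\sigma)=0$ in $\Omega^{k,1}$, so the horizontal homotopy at $l=1$ produces $\tau_1:=h_{\uph}^{k,1}(\dv\sigma)\in\Omega^{k-1,1}$ with $\dv\sigma=\dDh\tau_1$. Applying $\dv$ again yields $\dDh(\dv\tau_1)=0$ in $\Omega^{k-1,2}$, so $\dv\tau_1=\dDh\tau_2$ with $\tau_2=h_{\uph}^{k-1,2}(\dv\tau_1)$, and so on down to $\tau_k\in\Omega^{0,k}$, where the horizontal index is exhausted. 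I then reverse direction: the vertical homotopy $h_{\upv}^{0,k}$ extracts a $\dv$-preimage from $\tau_k$, which is fed back through the chain by correcting each $\tau_j$ by a $\dDh$-exact term so as to peel off the residual $\dv$. The procedure terminates with $\tau\in\Omega^{k-1,0}$ such that $\dDh\tau=\sigma$.

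Exactness at $\Omega^{p,0}$ and at each $\mathscr{F}^l$ follows from the same machinery, initiated from the decomposition above. If $\mcE^{\vartriangle}(\msL)=\mcI^{\vartriangle}(\dv\msL)=0$, then $\dv\msL=-\dDh\eta_1$ with $\eta_1\in\Omega^{p-1,1}$, and the same chase through $\eta_j\in\Omega^{p-j,j}$ produces $\eta\in\Omega^{p-1,0}$ with $\msL=\dDh\eta$. Likewise, if $\sigma\in\mathscr{F}^l$ satisfies $\delta^{\vartriangle}_{\upv}\sigma=\mcI^{\vartriangle}(\dv\sigma)=0$, then $\dv\sigma=-\dDh\xi_1$ and the chase yields a preimage $\xi\in\mathscr{F}^{l-1}$ (with the convention $\mathscr{F}^0:=\Omega^{p,0}$ and $\delta^{\vartriangle}_{\upv}|_{\mathscr{F}^0}:=\mcE^{\vartriangle}$), so that $\delta^{\vartriangle}_{\upv}\xi=\sigma$.

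The main obstacle will be closing the tic-tac-toe cleanly at the corners of the bicomplex. Because neither $h_{\uph}^{k,0}$ nor $h_{\upv}^{k,0}$ is available, the descending and ascending branches of the chase have to terminate exactly at $(0,k)$ (respectively $(p,0)$) where only one of the two homotopies is available, and be re-initiated there using pure vertical (respectively horizontal) exactness; fitting the two halves together without leaving a spurious closed remainder requires careful bookkeeping. Sign management compounds this: each swap of $\dDh$ and $\dv$ contributes a minus, so the alternating sum must be arranged to telescope to the identity and not to zero. An equivalent but more explicit formulation, outlined in the thesis \cite{Pe2013} and analogous to Zharinov's construction in \cite{Zh2010}, is to package the alternating operators $h_{\uph}$ and $h_{\upv}$ into a single edge homotopy $H$ and verify global identities of the shape $\id=\dDh H+H\dDh$ on the bottom row, together with the analogous identities involving $\mcE^{\vartriangle}$ and $\delta^{\vartriangle}_{\upv}$ at the other positions; the combinatorial verification of these identities is the principal technical load of the proof.
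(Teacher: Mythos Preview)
Your tic-tac-toe strategy for the bottom row $\Omega^{k,0}$, $0\le k\le p$, is correct and standard; the paper does not reprove this piece but cites \cite{HyMa2004}, where essentially the same staircase is carried out.

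For the $\mathscr{F}^l$ part your description has a gap. The chase you sketch --- starting from $\dv\sigma=-\dDh\xi_1$ and continuing through $\xi_j\in\Omega^{p-j,l+j}$ --- runs \emph{upward} in vertical degree, away from $\mathscr{F}^{l-1}$, so it cannot by itself produce a preimage there. To make a genuine tic-tac-toe work at $\mathscr{F}^l$ you must run the staircase all the way to $\Omega^{0,l+p}$, invoke injectivity of $\dDh$ on $\Omega^{0,*}$ to get $\dv\xi_p=0$, and then chase back down using column exactness at each step to manufacture $\rho_0\in\Omega^{p,l-1}$ with $\sigma-\dv\rho_0\in\operatorname{im}\dDh$; only then does $\xi=\mcI^{\vartriangle}\rho_0$ serve as the preimage. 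That is a considerably longer chase than ``likewise'' suggests, and it is precisely where the corner bookkeeping you flag becomes heavy.

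The paper avoids the staircase for this piece altogether. It applies the vertical homotopy identity $\sigma=\dv h_{\upv}^{p,l}(\sigma)+h_{\upv}^{p,l+1}(\dv\sigma)$ and projects by $\mcI^{\vartriangle}$. The single observation that replaces the entire return leg is that $h_{\upv}$ anti-commutes with $\dDh$ (a consequence of Proposition~\ref{prop:symmetry}), so that $\mcI^{\vartriangle}h_{\upv}^{p,l+1}\dDh=0$; combined with $\dv\sigma=\delta^{\vartriangle}_{\upv}\sigma+\dDh\tau$ this gives, in a few lines, the closed-form homotopy $\mcH^l:=\mcI^{\vartriangle}\circ h_{\upv}^{p,l}$ satisfying $\sigma=\delta^{\vartriangle}_{\upv}\mcH^l(\sigma)+\mcH^{l+1}\delta^{\vartriangle}_{\upv}(\sigma)$ for $l\ge 2$, and $\sigma=\mcE^{\vartriangle}h_{\upv}^{p,1}(\sigma)$ when $l=1$. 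This is shorter, bypasses the sign and corner issues entirely, and yields explicit homotopy operators on the $\mathscr{F}$-column rather than an existence argument.
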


\begin{proof}

Let us break the complex into two pieces
\begin{equation}
0\rightarrow\mathbb{R}\rightarrow
\Omega^{0,0}\stackrel{\dDh}{\longrightarrow}
\Omega^{1,0}\stackrel{\dDh}{\longrightarrow}
\cdots\stackrel{\dDh}{\longrightarrow}
\Omega^{p-1,0}\stackrel{\dDh}{\longrightarrow}
\Omega^{p,0}\stackrel{\mcE^{\vartriangle}}{\longrightarrow}
\mathscr{F}^1 
\end{equation}
and
\begin{equation}
\Omega^{p,0}\stackrel{\mcE^{\vartriangle}}{\longrightarrow}
\mathscr{F}^1\stackrel{\delta^{\vartriangle}_{\upv}}{\longrightarrow}
\mathscr{F}^2\stackrel{\delta^{\vartriangle}_{\upv}}{\longrightarrow}
\cdots. 
\end{equation}

We only show the exactness of the second piece here. For any
$\sigma\in\mathscr{F}^l,l\geq1$, the vertical homotopy
operators give that 
\begin{equation}\label{eq:vehomoeq}
\sigma=\dv\,(h_{\upv}^{p,l}(\sigma))
+h_{\upv}^{p,l+1}(\dv \sigma).
\end{equation}
From \eqref{eq:dissig}, we have
\begin{equation}
\dv \sigma=\delta^{\vartriangle}_{\upv}\sigma
+\dDh\!\tau_1
\end{equation}
and
\begin{equation}
\dv\,(h_{\upv}^{p,l}(\sigma))=\delta^{\vartriangle}_{\upv}
(h_{\upv}^{p,l}(\sigma))+\dDh\!\tau_2,
\end{equation}
for some $\tau_1\in\Omega^{p-1,l+1}$ and $\tau_2\in\Omega^{p-1,l}$.
Since $\mcI^{\vartriangle}(\sigma)=\sigma$, we apply the
difference interior Euler operator $\mcI^{\vartriangle}$ to
equality \eqref{eq:vehomoeq} and obtain 
\begin{equation}\label{eq:Euleromega}
\begin{aligned}
\sigma&=\mcI^{\vartriangle}\left(\delta^{\vartriangle}_{\upv}
(h_{\upv}^{p,l}(\sigma))+\dDh\!\tau_2\right)+
\mcI^{\vartriangle}\circ
h_{\upv}^{p,l+1}\left(\delta^{\vartriangle}_{\upv}\sigma
+\dDh\!\tau_1\right)\\
&=\mcI^{\vartriangle}\circ\delta^{\vartriangle}_{\upv}\circ
h_{\upv}^{p,l}(\sigma) +\mcI^{\vartriangle}\circ
h_{\upv}^{p,l+1}\circ\delta^{\vartriangle}_{\upv}(\sigma)
+\mcI^{\vartriangle}\circ h_{\upv}^{p,l+1}(\dDh\!\tau_2)\\
&=\delta^{\vartriangle}_{\upv}\circ
h_{\upv}^{p,l}(\sigma) +\mcI^{\vartriangle}\circ
h_{\upv}^{p,l+1}\circ\delta^{\vartriangle}_{\upv}(\sigma)
-\mcI^{\vartriangle}\left(\dDh\left( h_{\upv}^{p,l+1}(\tau_2)\right)\right)\\
&=\delta^{\vartriangle}_{\upv}\circ
h_{\upv}^{p,l}(\sigma) +\mcI^{\vartriangle}\circ
h_{\upv}^{p,l+1}\circ\delta^{\vartriangle}_{\upv}(\sigma),
\end{aligned}
\end{equation}
where the third equality holds as
$\dDh$ anti-commutes
with $h_{\upv}^{p,l}$ as consequence of Proposition
\ref{prop:symmetry}. Note that when the operator
$\delta^{\vartriangle}_{\upv}$ is applied to
$(p,0)$-forms, it should be replaced by the difference
Euler--Lagrange operator $\mcE^{\vartriangle}$. To continue
 our proof, the equality \eqref{eq:Euleromega} should be separately
considered through the following two
cases.\\
$i)$ Let $l=1$. Recall that $\delta^{\vartriangle}_{\upv}\mcE^{\vartriangle}=0$, and for any $(p,1)$-form $\sigma$ satisfying
$\delta^{\vartriangle}_{\upv}(\sigma)=0$, \eqref{eq:Euleromega}
leads to that
\begin{equation}
\sigma=\mcE^{\vartriangle}\circ
h_{\upv}^{p,1}(\sigma),
\end{equation}
which implies the exactness. Namely, there exists a $(p,0)$-form
$\tau=h_{\upv}^{p,1}(\sigma)$,
such that $\sigma=\mcE^{\vartriangle}(\tau)$.\\
$ii)$ Let $l\geq2$.  We define homotopy operators to verify the exactness. The
following property is needed that for any $(p,m)$-form $\sigma$,
with $m\geq1$,
\begin{equation}
\delta^{\vartriangle}_{\upv}(\sigma)=
\delta^{\vartriangle}_{\upv}\mcI^{\vartriangle}(\sigma),
\end{equation}
which is an immediate result by applying
$\delta^{\vartriangle}_{\upv}$ to the equality
$\sigma=\mcI^{\vartriangle}(\sigma)+\dDh\!\tau$,
for some $\tau\in\Omega^{p-1,m}$. Therefore, for any
$\sigma\in\mathscr{F}^l,l\geq2$,
$h_{\upv}^{p,l}(\sigma)$ is a $(p,m)$-form with
$m=l-1\geq1$. Now the equality \eqref{eq:Euleromega} becomes 
\begin{equation}
\begin{aligned}
\sigma&=\delta^{\vartriangle}_{\upv}\circ
h_{\upv}^{p,l}(\sigma)
+\mcI^{\vartriangle}\circ h_{\upv}^{p,l+1}\circ\delta^{\vartriangle}_{\upv}(\sigma)\\
&=\delta^{\vartriangle}_{\upv}\circ\mcI^{\vartriangle}\circ
h_{\upv}^{p,l}(\sigma)
+\mcI^{\vartriangle}\circ h_{\upv}^{p,l+1}\circ\delta^{\vartriangle}_{\upv}(\sigma)\\
&=\delta^{\vartriangle}_{\upv}\left(\mcH^l(\sigma)\right)
+\mcH^{l+1}\left(\delta^{\vartriangle}_{\upv}(\sigma)\right),
\end{aligned}
\end{equation}
where the homotopy operators
$\mcH^l:\mathscr{F}^{l}\rightarrow\mathscr{F}^{l-1},l\geq2$
are defined by $\mcH^l=\mcI^{\vartriangle}\circ
h_{\upv}^{p,l}$. This finishes the proof of exactness
for the second piece.
\end{proof}


\end{document}